\definecolor{shadecolor}{rgb}{0.9,0.9,0.9}
\newcommand{\mathsym}[1]{{}}
\newcommand{\func}[2]{\mathrm{#1}\!\left(#2\right)}
\newcommand{\bigO}[1]{\func{O}{#1}}
\newtheorem{theorem}{Theorem}[section]
\newtheorem{lemma}[theorem]{Lemma}
\newtheorem{claim}[theorem]{Claim}
\newtheorem{definition}[theorem]{Definition}
\newtheorem{corollary}[theorem]{Corollary}
\newcommand{\cancel}[1]{}
\newcommand{\ALG}{\textsc{Sade}}
\newcommand{\E}{\ensuremath{\mathbb{E}}}
\begin{document}

\title{Competitive MAC under Adversarial SINR}

\author {
   Adrian Ogierman$^1$, Andrea Richa$^2$, Christian Scheideler$^1$, Stefan Schmid$^3$, Jin Zhang$^2$\\
   \small $^1$ Department of Computer Science, University of Paderborn, Germany; \{adriano,scheideler\}@upb.de\\
  \small $^2$ Computer Science and Engineering, SCIDSE, Arizona State University, USA; \{aricha,jzhang82@asu.edu\}@asu.edu\\
  \small $^3$ TU Berlin \& Telekom Innovation Labs, Berlin, Germany; stefan@net.t-labs.tu-berlin.de\\
}



\date{}

%

\maketitle

\begin{abstract}
This paper considers the problem of how to efficiently share a wireless medium which is subject to
harsh external interference or even jamming. While this problem has already been studied intensively for
simplistic single-hop or unit disk graph models, we make a leap forward and study MAC protocols
for the SINR interference model (a.k.a.~the \emph{physical model}).

We make two contributions. First, we introduce a new adversarial SINR model which captures a wide range of interference phenomena.
Concretely, we consider a powerful, adaptive adversary which can
\emph{jam} nodes at arbitrary times and which is only limited by some \emph{energy budget}.
The second contribution of this paper is a distributed MAC protocol which provably achieves a constant
competitive throughput in this environment: we show that, with high probability, the protocol ensures that a constant fraction of the non-blocked
time periods is used for successful transmissions.
\end{abstract}


\date{}

\maketitle

\sloppy

\section{Introduction}\label{sec:introduction}

The problem of coordinating the access to a shared medium is a central
challenge in wireless networks. In order to solve this problem, a proper
medium access control (MAC) protocol is needed. Ideally, such a protocol
should not only be able to use the wireless medium as effectively as possible,
but it should also be robust against a wide range of interference problems
including jamming attacks. Currently, the most widely used model to capture
interference problems is the SINR (signal-to-interference-and-noise ratio)
model~\cite{sinr-original}. In this model, a message sent by node $u$ is correctly
received by node $v$ if and only if
$
  P_v(u)/({\cal N}+\sum_{w \in S} P_v(w)) \ge \beta
$
where $P_x(y)$ is the received power at node $x$ of the signal transmitted by
node $y$, ${\cal N}$ is the background noise, and $S$ is the set of nodes
$w\not=u$ that are transmitting at the same time as $u$. The threshold
$\beta>1$ depends on the desired rate, the modulation scheme, etc. When using
the standard model for signal propagation, then this expression results in
$
  P(u)/d(u,v)^{\alpha}/({\cal N}+\sum_{w \in S} P(w)/d(w,v)^{\alpha})
  \ge \beta
$
where $P(x)$ is the strength of the signal transmitted by $x$, $d(x,y)$ is the
Euclidean distance between $x$ and $y$, and $\alpha$ is the path-loss
exponent. In this paper, we will assume that all nodes transmit with some
fixed signal strength $P$ and that $\alpha>2+\epsilon$ for some constant
$\epsilon>0$, which is usually the case in an outdoors environment~\cite{rappaport}.

In most papers on MAC protocols, the background noise ${\cal N}$ is either
ignored (i.e., ${\cal N}=0$) or assumed to behave like a Gaussian variable.
This, however, is an over-simplification of the real world. There are many
sources of interference producing a non-Gaussian noise such as electrical
devices, temporary obstacles, co-existing networks~\cite{podc12stefan},
 or jamming attacks. Also,
these sources can severely degrade the availability of the wireless medium
which can put a significant stress on MAC protocols that have only been
designed to handle interference from the nodes themselves. In order to capture
a very broad range of noise phenomena, one of the main contributions of this
work is the modeling of the background noise ${\cal N}$ (due to jamming or to
environmental noise) with the aid of an adversary ${\cal ADV}(v)$ that has a
fixed energy budget within a certain time frame for each node $v$. More
precisely, in our case, a message transmitted by a node $u$ will be
successfully received by node $v$ if and only if
\begin{equation}
  \frac{P/d(u,v)^{\alpha}}{{\cal ADV}(v)+\sum_{w \in S} P/d(w,v)^{\alpha}}
  \ge \beta
\label{eq:sinr}
\end{equation}
\noindent where ${\cal ADV}(v)$ is the current noise level created by the adversary at node $v$. Our
goal will be to design a MAC protocol that allows the nodes to successfully
transmit messages under this model as long as this is in principle possible.
Prior to our work, no MAC protocol has been shown to have this property.

\cancel{This paper studies the design of efficient medium access protocols
which are \emph{competitive} with respect to an arbitrarily unavailable
medium. For example, these unavailabilities may be due to transmissions in
co-existing networks, due to an external jammer, or any other physical
disturbance in the corresponding frequency domains. In order to capture these
different types of external interference, we use a strong adversarial model:
Our adversary can \emph{adaptively} jam transmissions. Despite this adversary,
we seek to maintain a constant competitive throughput in the arbitrarily
distributed time periods which are non-jammed. Obviously, this is the best we
can hope for.}

\textbf{Model.}
We assume that we have a static set $V$ of $n$ wireless nodes that have arbitrary
fixed positions in the 2-dimensional Euclidean plane so that no two nodes have
the same position. The nodes communicate over a wireless medium
with a single channel. We also assume that the nodes are backlogged in the
sense that they always have something to broadcast. Each node sends at a fixed
transmission power of $P$, and a message sent by $u$ is correctly received by
$v$ if and only if
$  P/d(u,v)^{\alpha}/({\cal ADV}(v)+\sum_{w \in S} P/d(w,v)^{\alpha})
  \ge \beta
$
%
For our formal description and analysis, we assume a synchronized setting
where time proceeds in synchronized time steps called {\em rounds}. In each
round, a node $u$ may either transmit a message or sense the channel, but it
cannot do both. A node which is sensing the channel may either $(i)$ sense an
{\em idle} channel, $(ii)$ sense a {\em busy} channel, or $(iii)$ {\em
receive} a packet. In order to distinguish between an idle and a busy channel,
the nodes use a fixed noise threshold $\vartheta$: if the measured signal power
exceeds $\vartheta$, the channel is considered busy, otherwise idle. Whether a
message is successfully received is determined by the SINR rule described
above.

Physical carrier sensing is part of the 802.11 standard, and is provided by a
Clear Channel Assessment (CCA) circuit. This circuit monitors the environment
to determine when it is clear to transmit. The CCA functionality can be
programmed to be a function of the Receive Signal Strength Indication (RSSI)
and other parameters.  The ability to manipulate the CCA rule allows the MAC
layer to optimize the physical carrier sensing to its needs. Adaptive settings
of the physical carrier sensing threshold have been used, for instance, in
\cite{mobihoc08} to increase spatial reuse.

In addition to the nodes there is an \emph{adversary} that controls the
background noise. In order to cover a broad spectrum of noise phenomena, we
allow this adversary to be adaptive, i.e., for each time step $t$ the
adversary is allowed to know the state of all the nodes in the system at the
beginning of $t$ (i.e., before the nodes perform any actions at time $t$) and
can set the noise level ${\cal ADV}(v)$ based on that for each node $v$. To
leave some chance for the nodes to communicate, we restrict the adversary to
be {\em $(B,T)$-bounded}:
for each node $v$ and time interval $I$ of length $T$, a {\em $(B,T)$-bounded
adversary} has an overall noise budget of $B\cdot T$ that it can use to
increase the noise level at node $v$ and that it can distribute among the time
steps of $I$ as it likes. This adversarial noise model is very general, since in addition to being adaptive, the adversary is allowed to make independent decisions on which nodes to jam at any point in time (provided that the adversary does not exceed its noise budget over a window of size $T$).
 In this way, many noise phenomena can be covered.

Our goal is to design a {\em symmetric local-control} MAC protocol (i.e.,
there is no central authority controlling the nodes, and all the nodes are
executing the same protocol) that has a constant competitive throughput
against any $(B,T)$-bounded adversary as long as certain conditions (on $B$
etc.) are met. In order to define what we mean by ``competitive'', we need
some notation. The {\em transmission range} of a node $v$ is defined as the
disk with center $v$ and radius $r$ with $P/r^{\alpha} \ge \beta \vartheta$.
Given a constant $\epsilon>0$, a time step is called {\em potentially busy} at
some node $v$ if ${\cal ADV}(v) \ge (1-\epsilon)\vartheta$ (i.e., only a
little bit of additional interference by the other nodes is needed so that $v$
sees a busy channel). For a not potentially busy time step, it is still
possible that a message sent by a node $u$ within $v$'s transmission range is
successfully received by $v$. Therefore, as long as the adversary is forced to
offer not potentially busy time steps due to its limited budget and every node
has a least one other node in its transmission range, it is in principle
possible for the nodes to successfully transmit messages. To investigate that
formally, we use the following notation. For any time frame $F$ and node $v$
let $f_v(F)$ be the number of time steps in $F$ that are not potentially busy
at $v$ and let $s_v(F)$ be the number of time steps in which $v$ successfully
receives a message. We call a protocol {\em $c$-competitive} for some time
frame $F$ if
$  \sum_{v \in V} s_v(F) \ge c \sum_{v \in V} f_v(F).
$
An adversary is {\em uniform} if at any time step, ${\cal ADV}(v)={\cal
ADV}(w)$ for all nodes $v,w \in V$, which implies that $f_v(F)=f_w(F)$ for all
nodes. Note that the scope of this paper is not restricted to the case of a uniform jammer (cf~Theorem \ref{thm:main}).

Since the MAC protocol presented in this paper will be randomized, our performance results typically hold \emph{with high probability} (short: \emph{w.h.p.}): this means a probability of at least $1-1/n^c$ for any
constant $c>0$.

\textbf{Our Contribution.}
The contribution of this paper is twofold. First of all, we introduce a novel
extension of the SINR model in order to investigate MAC protocols that are
robust against a broad range of interference phenomena. Second, we present a
MAC protocol called $\ALG$\footnote{\textsc{Sade} stands for SINR
\textsc{Jade}, the SINR variant of the jamming defense protocol
in~\cite{disc10}.} which can achieve a $c$-competitive throughput where $c$
only depends on $\epsilon$ and the path loss exponent $\alpha$ but not on the
size of the network or other network parameters. (In practice, $\alpha$ is typically in the range $2 < \alpha < 5$, and thus $c$ is a constant for fixed $\epsilon$.~\cite{rappaport})
Let $n$ be the number of nodes and
let $N = \max\{n, T \}$. Concretely, we show:

\begin{theorem}\label{thm:main}
When running $\ALG$ for at least $\Omega((T \log N)/\epsilon + (\log
N)^4/(\gamma \epsilon)^2)$ time steps, $\ALG$ has a
$2^{-\bigO{(1/\epsilon)^{2/(\alpha-2)}}}$-competitive throughput for any
$((1-\epsilon)\vartheta,T)$-bounded adversary as long as (a) the adversary is
uniform and the transmission range of every node contains at least one node,
or (b) there are at least $2/\epsilon$ nodes within the transmission range of
every node.
\end{theorem}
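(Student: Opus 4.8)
The plan is to follow the self-stabilization paradigm underlying the single-hop \textsc{Jade} analysis, but to augment it with a geometric SINR argument that accounts for the $\alpha$- and $\epsilon$-dependence in the competitive ratio. The backbone of the protocol is that each node $v$ maintains a sending probability $p_v$ that it updates multiplicatively by a factor $(1\pm\gamma)$ depending on whether it senses the channel idle or busy, together with a time-window counter that resynchronizes $p_v$ roughly every $\Theta(T)$ steps so that the protocol can track the $((1-\epsilon)\vartheta,T)$-bounded adversary. The central goal is therefore a two-sided invariant: with high probability, after an initial adaptation phase, the aggregate sending probability $\sum_{w \in D} p_w$ over any disk $D$ of the relevant radius is bounded both from above and from below by constants.

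First I would establish the upper bound on the local cumulative probability. If $\sum_{w} p_w$ were too large in some region, then with constant probability two or more nearby nodes transmit simultaneously, so that by the SINR rule \eqref{eq:sinr} the channel is sensed busy at the affected nodes; the multiplicative decrease rule then drives the local sum back down. A drift argument on the potential $\sum_w \log p_w$ shows this correction happens within the claimed $\bigO{(\log N)^4/(\gamma\epsilon)^2}$ steps, and that the bound is maintained w.h.p.~thereafter. Symmetrically, during the $f_v(F)$ steps that are \emph{not} potentially busy, if the local probability were too small then $v$ would sense an idle channel, triggering an increase. The subtlety here is the adversary: because it is $((1-\epsilon)\vartheta,T)$-bounded it is forced to leave a $\Theta(\epsilon)$-fraction of any length-$T$ window not potentially busy, and it is precisely these steps against which we must be competitive. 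The two case hypotheses (a)/(b) enter exactly here: under a uniform adversary all nodes see the same blocked pattern, so a single node in range suffices, whereas under a node-specific adversary we need $2/\epsilon$ neighbors so that, against any targeting strategy, enough receiver--sender pairs are simultaneously unblocked.

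The hard part is the SINR success step, which replaces the trivial ``exactly one sender'' event of the single-hop model. Conditioned on the two-sided probability invariant, I would partition the plane into annuli around a receiver $v$ and bound the expected aggregate interference $\sum_{w\in S} P/d(w,v)^\alpha$ by summing a geometric-like series over the annuli; convergence requires $\alpha>2$, and optimizing the radius of the ``silent core'' that must be free of competing senders against this interference tail yields a packing bound with exponent $2/(\alpha-2)$. Since the probability that this core of $\bigO{(1/\epsilon)^{2/(\alpha-2)}}$ nodes stays silent is exponential in its size, one obtains the $2^{-\bigO{(1/\epsilon)^{2/(\alpha-2)}}}$ factor. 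With that constant probability a single node inside $v$'s transmission range transmits while its signal dominates the bounded interference, giving a successful reception.

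Finally I would assemble the pieces: for each not-potentially-busy step at $v$, the invariant and the SINR success step together yield a constant per-step probability that $v$ successfully receives, so $\E[s_v(F)] \ge c\, f_v(F)$. Summing over the $f_v(F)$ unblocked steps and over all nodes, and applying a Chernoff bound over the $\Omega((T\log N)/\epsilon + (\log N)^4/(\gamma\epsilon)^2)$ rounds, turns this expectation into the claimed $2^{-\bigO{(1/\epsilon)^{2/(\alpha-2)}}}$-competitive throughput w.h.p. I expect the geometric interference-versus-silence balancing in the third paragraph to be the main obstacle, since it is what ties the spatial SINR structure to the probabilistic invariant and produces the precise competitive constant.
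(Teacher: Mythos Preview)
Your high-level architecture---upper-bound the local aggregate probability, use a geometric zone decomposition so that far-away interference sums to $O(\epsilon\vartheta)$, and then argue that a ``silent core'' of radius $R_2=O((1/\epsilon)^{1/(\alpha-2)}R_1)$ is quiet with probability $2^{-O((1/\epsilon)^{2/(\alpha-2)})}$---matches the paper. Two points, however, depart from what the paper actually does, and one of them is a real gap.

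\textbf{A different route on the lower side.} You propose a \emph{two-sided} invariant on $\sum_{w\in D}p_w$. The paper never proves a lower bound on the aggregate probability. Instead, it proves only the upper bound (via the sector/subframe machinery) and replaces the lower bound by a counting argument on a single node: letting $k_0$ be the number of idle steps $v$ sees and $k_1$ the number of messages $v$ receives, either $k_1\ge k_0/6$ and we are done, or the balance of increments/decrements forces $p_v=\hat p$ for at least $k_0/4$ rounds, in which rounds a fixed neighbor receives from $v$ with constant probability. Your lower-bound route may be workable, but note that $p_w$ is decreased both on successful reception \emph{and} by the $T_v$-mechanism, so showing $\sum_w p_w$ stays bounded below is not as immediate as the idle-increase rule suggests; the paper's $k_0/k_1$ case split sidesteps this entirely. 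Also, the object whose size is $O((1/\epsilon)^{2/(\alpha-2)})$ is the aggregate probability over Zones~1 and~2 (equivalently, the number of $R_1$-disks covering Zone~2), not the number of nodes---the node count can be arbitrary.

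\textbf{A genuine gap in case (b).} Your sentence ``we need $2/\epsilon$ neighbors so that enough receiver--sender pairs are simultaneously unblocked'' is where the real work hides. The paper first isolates an intermediate notion: a round is \emph{open} for $v$ if both $v$ and some neighbor in $D_1(v)$ are not potentially busy, and proves the competitive bound for a \emph{weakly} bounded adversary (one that leaves a constant fraction of non-busy rounds open). Case~(a) is then immediate because a uniform adversary makes every non-busy round open. Case~(b) is not: one must show that $|D_1(v)|\ge 2/\epsilon$ forces $\sum_v o_v \ge \tfrac12\sum_v f_v$. The paper does this by bounding the set $J$ of rounds in which at most one node of $D_1(u)$ is non-jammed (a pigeonhole over the $2/\epsilon$ neighbors gives $|J|\le(1-\epsilon/2)|I|$), and then passes from local disks to the global sum via a dominating set of constant density. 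Without this open-round reduction and the dominating-set aggregation, the step from ``$2/\epsilon$ neighbors'' to ``constant-competitive overall'' does not follow.
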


On the other hand, we also show the following.

\begin{theorem}\label{thm:main_nonuniform}
The nodes can be positioned so that the transmission range of every node is
non-empty and yet no MAC protocol can achieve any throughput against a
$(B,T)$-bounded adversary with $B > \vartheta$, even if it is 1-uniform.
\end{theorem}

The two theorems demonstrate that our $\ALG$ protocol is basically as robust
as a MAC protocol can get within our model. However, it should be possible to
improve the competitiveness. We conjecture that a polynomial dependency on
$(1/\epsilon)$ is possible, but showing that formally seems to be hard. In
fact, a different protocol than $\ALG$ would be needed for that.

To complement our formal analysis and worst-case bounds, we also report on the results of our simulation study.
This study confirms many of our theoretical results, but also shows that the actual performance is often better
than in the worst-case. For instance, it depends to a lesser extent on $\epsilon$.

\textbf{Paper Organization.}
The remainder of this paper is organized as follows. We present our algorithm
in Section~\ref{sec:algo}, and subsequently analyze its performance in
Section~\ref{sec:analysis}. Simulation results are presented in Section~\ref{sec:simulation}.
After reviewing related work in
Section~\ref{sec:relwork}, we conclude our paper with a discussion in
Section~\ref{sec:conclusion}.

\section{Algorithm}\label{sec:algo}

The intuition behind \ALG\ is simple: Each node $v$ maintains a parameter
$p_v$ which specifies $v$'s probability of accessing the channel at a given
moment of time. That is, in each round, each node $u$ decides to broadcast a
message with probability $p_v$. (This is similar to classical random backoff
mechanisms where the next transmission time $t$ is chosen uniformly at random
from an interval of size $1/p_v$.) The nodes adapt their $p_v$ values over
time in a multiplicative-increase multiplicative-decrease manner, i.e., the
value is lowered in times when the channel is utilized (more specifically, we
decrease $p_v$ whenever a successful transmission occurs) or increased during
times when the channel is idling. However, $p_v$ will never exceed $\hat{p}$,
for some constant $0<\hat{p}<1$ to be specified later.

In addition to the probability value $p_v$, each node $v$ maintains a
time window threshold estimate $T_v$ and a counter $c_v$ for $T_v$. The variable $T_v$ is used to estimate the
adversary's time window $T$: a good estimation of $T$ can help the
nodes recover from a situation where they experience high
interference in the network. In times of high interference, $T_v$
will be increased and the sending probability $p_v$ will be
decreased.


With these intuitions in mind, we can describe \ALG\ in full detail.

\begin{shaded}
Initially, every node $v$ sets $T_v:=1$, $c_v:=1$, and $p_v:=\hat{p}$. In
order to distinguish between idle and busy rounds, each node uses a fixed
noise threshold of $\vartheta$.

The \ALG\ protocol works in synchronized rounds. In every round, each node $v$
decides with probability $p_v$ to send a message. If it decides not to send a
message, it checks the following two conditions:
\begin{itemize}
\item If $v$ successfully receives a message, then $p_v:=(1+\gamma)^{-1}p_v$.
\item If $v$ senses an idle channel (i.e., the total noise created by transmissions of other nodes
and the adversary is less than $\vartheta$), then $p_v:= \min\{(1+\gamma)p_v,
\hat{p}\}, T_v := \max\{ 1, T_v - 1 \}$.
\end{itemize}
Afterwards, $v$ sets $c_v:=c_v+1$. If $c_v>T_v$ then it does the following:
$v$ sets $c_v:=1$, and if there was no idle step among the past $T_v$ rounds,
then $p_v:= (1+\gamma)^{-1} p_v$ and $T_v:=T_v+2$.
\end{shaded}

In order for \ALG\ to be constant competitive in terms of throughput, the
parameter $\gamma$ needs to be a sufficiently small value that depends very
loosely on $n$ and $T$. Concretely, $\gamma\in O(1/(\log T + \log\log n))$.

Our protocol \ALG\ is an adaption of the MAC protocol described
in~\cite{disc10} for Unit Disk Graphs that works in more realistic network
scenarios considering physical interference. The main difference in the new
protocol is that in order to use the concepts of idle and busy rounds, the
nodes employ a fixed noise threshold $\vartheta$ to distinguish between idle (noise
$< \vartheta$) and busy rounds (noise $\ge \vartheta$): in some scenarios the threshold may not be representative, in the sense that,
since the
success of a transmission depends on the noise at the receiving node and on
$\beta$, it can happen that a node senses an idle or busy channel while
\emph{simultaneously} successfully receiving a message. In order to deal with
this problem, $\ALG$ first checks whether a message is successfully received,
and \emph{only otherwise} takes into account whether a channel is idle or busy.
Another change to the protocol in~\cite{disc10} is that we adapt $T_v$ based
on idle time steps which allows us to avoid the upper bound on $T_v$ in the
protocol in \cite{disc10} so that our protocol is more flexible.

\section{Analysis}\label{sec:analysis}

While the MAC protocol \ALG\ is very simple, its stochastic analysis is rather
involved: it requires an understanding of the complex interplay of the nodes
following their randomized protocol in a dependent manner. In particular, the
nodes' interactions  depend on their distances (the geometric setting). In
order to study the throughput achieved by \ALG, we will consider some fixed
node $v\in V$ and will divide the area around $v$ into three circular and
concentric \emph{zones}.

Let $D_R(v)$ denote the \emph{disk} of radius $R$ around a given node $v\in
V$. In the following, we will sometimes think of $D_R(v)$ as the corresponding
geometric area on the plane, but we will also denote by $D_R(v)$ the \emph{set
of nodes} located in this area. The exact meaning will be clear from the
context. Moreover, whenever we omit $R$ we will assume $R_1$ as radius, where
$R_1$ is defined as in Definition~\ref{def:zones}.

\begin{definition}[Zones]\label{def:zones}
Given any node $v\in V$, our analysis considers three zones around $v$,
henceforth referred to as Zone~1, Zone~2,
and Zone~3: Zone~1 is the disk of radius $R_1$ around $v$,
Zone~2 is the disk of radius $R_2$ around $v$ \emph{minus} Zone~1,  and
Zone~3 is the remaining part of the plane. Concretely:
\begin{enumerate}
\item Zone 1 covers the transmission range of $v$, i.e., its radius
$R_1$ is chosen so that $P/R_1^{\alpha} \ge \beta \vartheta$, which implies
that $R_1 = \sqrt[\alpha]{P/(\beta \vartheta)}$. Region $D_{R_1}(v)$ has the
property that if there is at least one sender $u\in D_{R_1}(v)$, then $v$ will
either successfully receive the message from $u$ or sense a busy channel, and
$v$ will certainly receive the message from $u$ if the overall interference caused by
other nodes and the adversary is at most $\vartheta$.

\item Zone 2 covers a range that we call the
{\em (critical) interference range} of $v$. Its radius $R_2$ is chosen in a
way so that if none of the nodes in Zone 1 and Zone 2 transmit a message, then
the interference at any node $w \in D_{R_1}(v)$ caused by transmitting nodes
in Zone 3 is likely to be less than $\epsilon \vartheta$. Hence, if the
current time step is potentially non-busy at some $w \in D_{R_1}(v)$ (i.e.,
${\cal ADV}(w) \le (1-\epsilon)\vartheta$), then the overall inference at $w$
is less than $\vartheta$, which means that $w$ will see an idle time step. It
will turn out that $R_2$ can be chosen as $\bigO{(1/\epsilon)^{1/(\alpha-1)}
R_1}$.

\item Everything outside of Zone~2 is called \emph{Zone~3}.
\end{enumerate}
Whenever it is clear from the context, we use $D_1, D_2$, and $D_3$ instead of
$D_{R_1}, D_{R_2}$, and the area covered by Zone 3, respectively.
\end{definition}


The key to proving a constant competitive throughput is the analysis of the
aggregate probability (i.e., the sum of the individual sending probabilities
$p_v$) of nodes in disks $D_{1}(v)$ and $D_{2}(v)$: We will show that the
expected aggregate probabilities of $D_{1}(v)$ and $D_{2}(v)$, henceforth
referred to by $p_1$ and $p_2$, are likely to be at most a constant. Moreover,
our analysis shows that while the aggregate probability $p_3$ of the
potentially infinitely large Zone~3 may certainly be unbounded (i.e., grow as
a function of $n$), the aggregated power received at any node $w \in
D_{1}(v)$ from all nodes in Zone~3 is also constant on expectation.

\cancel{
\subsection{Radii of Zones}

We first derive the radii of Zones~1 and~2.
\begin{lemma}\label{lemma:zones}
Radius $R_1=\left( 1/\beta \vartheta \right)^{1/\alpha}$ fulfills the requirements for Zone~1 (Definition~\ref{def:zones}), where $\beta$ is the threshold to successfully receive a transmission. Radius $R_2= c R_1$ for some sufficiently large constant $c>2$ fulfills the requirements of Zone~2.
\end{lemma}
\begin{proof}
\emph{Zone 1:} In order to receive a transmission successfully it must hold that $P/(d^\alpha) \geq \beta\vartheta$, where $d$ is the distance between sender and receiver and $P$ normalized to $1$. Hence, the maximal distance is given by $(1/\beta \vartheta)^{1/\alpha}$.

\emph{Zone 2:} Let $v$ be the center node. As we already stated, Zone 2
describes the potential interference range with respect to $v$. Consequently,
the larger $R_2$ the farther the border of Zone 3 from $v$. In other words,
Zone~2 ``absorbs'', and therefore bounds, the expected noise received by $v$
from Zone~3. To estimate the corresponding noise, let us partition Zone~3 into
\emph{rings} of increasing radii, where the distance between two rings is
$\sqrt{2} R_1$, and let us fill these rings with disks of radius $R_1$. Note,
due to the radius $R_1$, each such disk contains a rectangle with side length
$\sqrt{2} R_1$. This ensures that each point in the Euclidean space is
covered; in fact, we perform an over-counting and hence obtain a conservative
upper bound on $R_2$.

Now observe that such an $i$th ring segment of some radius $i \cdot \sqrt{2}
R_1$ can be divided into $\frac{\sqrt{2} R_1 i 2 \pi}{\sqrt{2} R_1} = 2 i \pi$
many (slightly overlapping) rectangles. Since $R_1 \in O(1)$ and the aggregate
probability for each section in each disk is limited by a constant (cf.
Lemma~\ref{lem:upper}) on expectation for a sufficiently amount of time steps,
the noise generated in Zone~3 is limited by
$$c \sum\limits_{i=R_1}^{\infty} \frac{2 \pi i} {i^\alpha} \leq \frac{2c\pi}{\alpha -2} R_1^{-\alpha-2}$$
\noindent on expectation with $\alpha>2$.
\end{proof}

From Lemma~\ref{lemma:zones} we observe that the radii of Zones~1 and~2 are constant under a constant noise level $\vartheta$.
\begin{corollary}\label{cor:constantspace}
The areas covered by $D_{1}(v)$ and $D_{2}(v)$ are constant.
\end{corollary}

Our analysis will make use of Corollary~\ref{cor:constantspace} in the sense
that it allows us to argue about the existence of periods in which no single
node in Zone~1 and~2 will transmit. That is, up to the (limited) noise from
Zone~3, the area is perfectly idle.}

\subsection{Zone 1}
To show an upper bound on
$p_1=\sum_{u\in D_1(v)} p_u$, i.e., the aggregate probability of the nodes in
Zone~1 of $v$, we can follow a strategy similar to the one introduced for the
Unit Disk Graph protocol~\cite{disc10}.

In the following, we assume that the budget $B$ of the adversary is limited by
$(1-\epsilon')\vartheta$ for some constant $\epsilon'=2\epsilon$. In this
case, $B$ is at most $(1-\epsilon)^2 \vartheta$. We first look at a slightly
weaker form of adversary. We say that a round $t$ is {\em open} for a node $v$
if $v$ and at least one other node $w$ within its transmission range are
potentially non-busy, i.e., ${\cal ADV}(v) \le (1-\epsilon)\vartheta$ and
${\cal ADV}(w) \le (1-\epsilon)\vartheta$ (which also implies that $v$ has at
least one node within its transmission range). An adversary is {\em weakly
$(B,T)$-bounded} if it is $(B,T)$-bounded and in addition to this, at least a
constant fraction of the potentially non-busy rounds at each node is open in
every time interval of size $T$. We will show the following result:

\begin{theorem} \label{th:main}
When running $\ALG$ for at least $\Omega((T \log N)/\epsilon' + (\log
N)^4/(\gamma \epsilon')^2)$ time steps, $\ALG$ has a
$2^{-\bigO{(1/\epsilon')^{2/(\alpha-2)}}}$-competitive throughput for any
weakly $((1-\epsilon')\vartheta,T)$-bounded adversary.
\end{theorem}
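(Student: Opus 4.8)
The plan is to establish the $c$-competitiveness by controlling the aggregate sending probabilities in the three zones and then arguing that a constant fraction of open rounds lead to successful receptions. First I would prove the key structural lemmas bounding the expected aggregate probabilities $p_1$ and $p_2$ in Zones~1 and~2 by constants (the statement already alludes to Lemma~\ref{lem:upper}). The intuition is a potential/drift argument: whenever $p_1$ (or $p_2$) grows too large, collisions and successful transmissions become frequent, which trigger the multiplicative decrease $p_v \leftarrow (1+\gamma)^{-1} p_v$; conversely, idle rounds trigger multiplicative increase. I would define an appropriate potential (e.g. $\sum_{u \in D_1} p_u$ together with the $T_v$-counters) and show that it cannot stay above a constant threshold for long without forcing a compensating decrease, so that over a window of $\Omega((T\log N)/\epsilon' + (\log N)^4/(\gamma\epsilon')^2)$ rounds the time-average of $p_1$ and $p_2$ is bounded w.h.p.

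The next step is the geometric reduction that makes Zone~2 the ``firewall'' against Zone~3. Using the choice $R_2 = \bigO{(1/\epsilon')^{1/(\alpha-1)} R_1}$ together with $\alpha > 2+\epsilon$, I would show that conditioned on the bounded aggregate probabilities, the expected interference arriving at any $w \in D_1(v)$ from all transmitters in Zone~3 is at most $\epsilon'\vartheta/2$ (summing the per-ring contributions $\propto \sum_i i^{1-\alpha}$, which converges precisely because $\alpha > 2$). Combined with the weak-adversary hypothesis $B \le (1-\epsilon')\vartheta$, this guarantees that in an open round where no node in Zones~1--2 other than a single sender $u \in D_1(v)$ transmits, the total noise at $v$ stays below $\vartheta$, so by the Zone~1 property $v$ successfully receives from $u$. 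I would then apply a Chernoff-type bound to argue that in a constant fraction of open rounds the ``exactly one sender in Zone~1, none in Zone~2, tolerable Zone~3 noise'' event indeed occurs, which is where the constant sending-probability bounds feed back in.

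Putting these together, I would show that the number of successful receptions $\sum_v s_v(F)$ is at least a constant fraction of the open rounds, and since the weak-adversary definition guarantees that open rounds are a constant fraction of the potentially non-busy rounds $\sum_v f_v(F)$, the competitive ratio follows. The constant degrades to $2^{-\bigO{(1/\epsilon')^{2/(\alpha-2)}}}$ because the probability of the ``singleton successful transmission'' event scales with the number of nodes that must be simultaneously silent in the (possibly dense) interference region, whose size grows like $(R_2/R_1)^2 = \bigO{(1/\epsilon')^{2/(\alpha-1)}}$; tracking this dependence carefully is where the stated exponent comes from.

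The hard part, I expect, is the drift analysis bounding $p_1$ and $p_2$ in the adaptive, adversarial, and \emph{dependent} setting: the nodes' updates are correlated through the shared channel, the adversary reacts to the current state, and the $T_v$ estimates evolve nonmonotonically. Establishing that the self-tuning of $T_v$ really does track $T$ within the claimed horizon $\Omega((T\log N)/\epsilon' + (\log N)^4/(\gamma\epsilon')^2)$ w.h.p., while simultaneously keeping the aggregate probabilities concentrated, is the crux and will require the careful choice $\gamma \in O(1/(\log T + \log\log n))$ to make the martingale fluctuations small enough for a union bound over all nodes and the whole time frame.
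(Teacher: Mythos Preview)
Your zone decomposition and the program of bounding aggregate probabilities in Zones~1 and~2 by constants, together with bounding the Zone~3 interference geometrically, is indeed the backbone of the paper's argument. However, there is a genuine gap in the final step where you conclude throughput from these bounds.

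You propose to show that in a constant fraction of open rounds the event ``exactly one sender in $D_1(v)$, none in Zone~2, tolerable Zone~3 noise'' occurs, and you say the constant upper bounds on $p_1,p_2$ are what make this work. But an \emph{upper} bound on $\sum_{u\in D_1(v)} p_u$ does not by itself give a constant probability of having \emph{exactly one} sender: if all $p_u$ drift down to, say, $1/n^2$ (which is entirely consistent with $p_1\le 6\rho_{red}$), the singleton event has probability $o(1)$. You never establish a \emph{lower} bound on the aggregate probability, and in fact no such uniform lower bound holds under the adaptive adversary.

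The paper closes this gap with a different mechanism that you have not identified. It first proves (via the Zone bounds you outline) that every node $v$ senses an \emph{idle} channel in at least a $2^{-\bigO{(1/\epsilon')^{2/(\alpha-2)}}}$-fraction of the rounds; call this count $k_0$. Then it does a case split on $k_1$, the number of messages $v$ receives. If $k_1\ge k_0/6$, competitiveness for $v$ is immediate. Otherwise, a counting argument on increases versus decreases of $p_v$ (idle rounds raise $p_v$; received messages and the $T_v$-triggered decreases lower it) forces $p_v=\hat p$ in at least $k_0/4$ rounds. In those rounds $v$ itself transmits with constant probability, and the open-round hypothesis supplies a neighbor $w\in D_1(v)$ that can receive; the Zone~2/3 bounds ensure reception succeeds with constant probability. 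A final charging argument (half to sender, half to receiver) then yields the competitive ratio summed over all nodes. The point is that the lower bound on activity is obtained \emph{per node} via the cap $\hat p$ and the balance of $p_v$-updates, not via a lower bound on the aggregate. Your plan as written would stall at exactly this step.

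Two smaller remarks: the exponent in $R_2$ should be $1/(\alpha-2)$ (the ring sum $\sum d\cdot d^{-\alpha}$ has tail $\Theta(c^{2-\alpha})$), which is what produces the $2/(\alpha-2)$ in the theorem; and the role of $T_v$ in the analysis is not to ``track $T$'' but merely to stay bounded by $\sqrt{F}$ so that the $p_v$-decreases it triggers do not swamp the idle-round increases in the counting argument above.
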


In order to prove this theorem, we focus on a {\em time frame} $I$ of size $F$
consisting of $\delta \log N / \epsilon$ {\em subframes} $I'$ of size
$f=\delta[T+(\log^3 N)/(\gamma^2 \epsilon)]$ each, where $f$ is a multiple of
$T$, $\delta$ is a sufficiently large constant, and $N = \max\{T, n\}$.
Consider some fixed node $v$. We partition $D_1(v)$ into six \emph{sectors} of
equal angles from $v$, $S_1,...,S_6$. Note that for any sector $S_i$ it holds
that if a node $u \in S_i$ transmits a message, then its signal strength at
any other node $u' \in S_i$ is at least $\beta \vartheta$. Fix a sector $S$
and consider some fixed time frame $F$. Let us refer to the sum of the sending
probabilities of the neighboring nodes of a given node $v\in S$ by $\bar{p}_v
:= \sum_{w\in S \setminus \{v\}} p_w$. The following lemma, which is proven in
\cite{disc10}, shows that $p_v$ will decrease dramatically if $\bar{p}_v$ is
high throughout a certain time interval.

\begin{lemma}\label{single_node}
Consider any node $w$ in $S$. If $\bar{p}_w>5-\hat{p}$ during \emph{all}
rounds of a subframe $I'$ of $I$ and at the beginning of $I'$, $T_w \le
\sqrt{F}$, then $p_w$ will be at most $1/n^2$ at the end of $I'$, w.h.p.
\end{lemma}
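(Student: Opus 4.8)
The plan is to track the evolution of $p_w$ across the rounds of the subframe $I'$ and show that the multiplicative-decrease events dominate. The key observation is that when $\bar p_w > 5 - \hat p$ throughout $I'$, the aggregate sending probability in the sector $S$ (including $w$'s own contribution, which is at most $\hat p$) exceeds $5$ in every round. Since any two nodes in the same sector are within each other's transmission range (their mutual signal strength is at least $\beta\vartheta$), a single sender in $S$ suffices for $w$ either to receive successfully or to sense a busy channel. The first step is therefore to argue that with high aggregate probability in the sector, in any given round $w$ is quite likely \emph{not} to see an idle channel: either $w$ itself transmits, or some other node in $S$ transmits and causes $w$ to receive a message or sense busy. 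Concretely, I would lower-bound the probability that at least one node in $S$ transmits using the $\bar p_w > 5 - \hat p$ hypothesis (a standard $1 - \prod(1-p_i) \ge 1 - e^{-\sum p_i}$ estimate), making the ``bad'' idle event rare in each round.

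The second step is to convert these per-round success/busy events into a drift on $\log p_w$. Each round in which $w$ does not transmit and receives a message applies the factor $(1+\gamma)^{-1}$; idle rounds apply $(1+\gamma)$; and the periodic counter check can apply an additional $(1+\gamma)^{-1}$ when no idle step occurred in the last $T_w$ rounds. I would set up $\log_{1+\gamma} p_w$ as a random walk and show that, because idle rounds are rare, the expected per-round decrement is bounded below by a positive constant times $\gamma$. The subframe has length $f = \delta[T + (\log^3 N)/(\gamma^2\epsilon)]$, so the expected total decrease in $\log_{1+\gamma} p_w$ over $I'$ is on the order of $(\log^3 N)/(\gamma \epsilon)$, which translates to $p_w$ shrinking by a factor far larger than $n^2$ (recall $N \ge n$, so $\log N \ge \log n$). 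The condition $T_w \le \sqrt F$ at the start of $I'$ ensures the counter-driven $T_v$-increase steps cannot inflate $T_w$ so much that the ``no idle in last $T_w$ rounds'' decrements become too sparse to count on, and it bounds how the periodic check interacts with the round count over the subframe.

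The third step is concentration. The per-round indicators (transmit / receive / idle) are not independent across rounds — $p_w$ itself changes, and the adversary is adaptive — so I cannot simply invoke a Chernoff bound on i.i.d.\ variables. The plan is to dominate the walk by a supermartingale: condition on the history up to each round, use the per-round lower bound on the expected decrement established in step two, and apply an Azuma–Hoeffding or Freedman-type martingale concentration inequality to show the actual decrease is close to its expectation with probability at least $1 - 1/n^c$. Since each round changes $\log_{1+\gamma} p_w$ by a bounded amount (at most a constant number of $\pm 1$ steps), the bounded-difference condition holds and the deviation is controlled by $\sqrt{f}$ up to logarithmic factors, which is dominated by the $(\log^3 N)/(\gamma\epsilon)$ expected drift for $\delta$ large enough.

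The main obstacle I anticipate is the adaptivity of the adversary combined with the dependence between $w$'s own actions and the sector's aggregate probability: the adversary can see the node states at the start of each round and choose where to spend its noise budget, so the event ``$w$ sees an idle channel'' is not something I control cleanly round-by-round. The $(B,T)$-boundedness with $B \le (1-\epsilon')\vartheta$ is what saves the argument — even in the rounds the adversary leaves non-busy, the hypothesis $\bar p_w > 5 - \hat p$ forces a sender in $S$ with high probability, so the adversary cannot both stay within budget and manufacture enough idle rounds to prevent the multiplicative decrease. Making this trade-off quantitative, and ensuring the rare idle rounds (and the associated $T_w$ decrements and increments) cannot conspire to cancel the drift, is the delicate part of the proof; since this lemma is cited as already established in~\cite{disc10}, I would lean on that reference for the detailed martingale bookkeeping and focus the new argument on the sector/SINR geometry that lets a single sector-mate stand in for the UDG neighbor relation.
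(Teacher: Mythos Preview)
The paper does not prove this lemma; it simply cites \cite{disc10}. Your outline captures the right high-level picture (idle rounds are rare, so $p_w$ is driven down), but step~2 contains a genuine gap. The claim that ``the expected per-round decrement is bounded below by a positive constant times $\gamma$'' is false. When $\bar p_w>5-\hat p$, the overwhelmingly likely per-round outcome is that some sector-mate transmits and $w$ senses a \emph{busy} channel---which leaves $p_w$ unchanged. You cannot rely on \emph{receive} events to supply the negative drift: interference from outside $S$ and the adversary's noise at $w$ are not controlled by the hypothesis, so the SINR condition may fail in every round and $w$ may never receive. Thus the per-round rules alone give drift essentially zero (rare idle increases, possibly zero receive decreases), not a negative constant.

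What actually drives $p_w$ down is the $T_w$-counter mechanism, and the argument is a counting argument rather than a per-round martingale. First bound the number of idle rounds in $I'$ w.h.p.: in each round, the probability that no node in $S\setminus\{w\}$ transmits is at most $e^{-(5-\hat p)}$ regardless of the history, so a Chernoff bound gives at most a small constant fraction of $f$ idle rounds. Conditioned on that, track $T_w$ deterministically: starting from $T_w\le\sqrt{F}$, the counter check fires (and multiplies $p_w$ by $(1+\gamma)^{-1}$) roughly every $T_w$ rounds while $T_w$ grows by~2 per firing, yielding on the order of $\sqrt{f/\sqrt{F}}\sim \log N/\gamma$ firings over the subframe. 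The few idle rounds can each cancel at most one firing (by providing an idle step in the preceding $T_w$-window) and contribute one $(1+\gamma)$ increase; both effects are dominated by the $\Theta(\log N/\gamma)$ firings, which is exactly what is needed to push $p_w$ from $\hat p$ down to $1/n^2$. So the probabilistic part is just the idle-round bound; the rest is bookkeeping, and the hypothesis $T_w\le\sqrt F$ is what makes the firing count large enough---it is the primary engine, not a side condition.
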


Given this property of the individual probabilities, we can derive an upper
bound for the aggregate probability of a Sector $S$. In order to compute
$p_S=\sum_{v\in S}p_v$, we introduce three thresholds, a low one,
$\rho_{green}=5$, one in the middle, $\rho_{yellow}=5e$, and a high one,
$\rho_{red}=5e^2$. The following three lemmas provide some important insights
about these probabilities. The first lemma is shown in \cite{disc10}.

\begin{lemma}\label{fast:recovery}
Consider any subframe $I'$ in $I$. If at the beginning of $I'$, $T_w \le
\sqrt{F}$ for all $w \in S$, then there is at least one round in $I'$ with
$p_S \le \rho_{green}$ w.h.p.
\end{lemma}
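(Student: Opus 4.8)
The plan is to prove Lemma~\ref{fast:recovery} by contradiction: assume that $p_S > \rho_{green} = 5$ in \emph{every} round of the subframe $I'$, and show that this forces every individual node $w \in S$ to collapse its sending probability, which in turn drives $p_S$ itself below $5$ — a contradiction. The crucial geometric observation, already recorded above, is that $S$ is a single sector of Zone~1, so any two nodes $u, u' \in S$ are close enough that a transmission by $u$ arrives at $u'$ with signal strength at least $\beta\vartheta$. This means the nodes in $S$ interfere with one another strongly enough that the sector behaves almost like a single-hop region, and we can leverage the per-node decay result of Lemma~\ref{single_node}.

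First I would fix an arbitrary node $w \in S$ and examine $\bar{p}_w = \sum_{u \in S \setminus \{w\}} p_u = p_S - p_w$. Under the contradiction hypothesis $p_S > 5$, and since each $p_u \le \hat{p}$ with $\hat{p}$ a small constant, we get $\bar{p}_w = p_S - p_w > 5 - \hat{p}$ in every round of $I'$. This is precisely the premise of Lemma~\ref{single_node}. The other premise I need is that $T_w \le \sqrt{F}$ at the start of $I'$, which is exactly the hypothesis of the present lemma, carried over for all $w \in S$. Hence Lemma~\ref{single_node} applies to \emph{every} node $w \in S$ simultaneously, and w.h.p.\ (by a union bound over the at most $n$ nodes in the sector, absorbing the $1/n^2$ failure probability into the high-probability guarantee) each $p_w$ is at most $1/n^2$ by the end of $I'$.

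From here the contradiction is immediate: summing over the at most $n$ nodes of $S$ gives $p_S = \sum_{w \in S} p_w \le n \cdot (1/n^2) = 1/n \ll 5 = \rho_{green}$ at the end of $I'$. But this directly contradicts the assumption that $p_S > 5$ in \emph{every} round of $I'$, including the last one. Therefore there must be at least one round in $I'$ in which $p_S \le \rho_{green}$, which is the claim.

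The main obstacle I anticipate is handling the probabilistic bookkeeping cleanly. Lemma~\ref{single_node} is itself a w.h.p.\ statement about a single node, and I invoke it for up to $n$ nodes at once, so I must argue that the union bound over the sector does not erode the high-probability guarantee — this works because the individual failure probability $1/n^2$ leaves ample slack, but the dependence between nodes' trajectories (they all share the same channel and adversary) means the cleanest route is to note that Lemma~\ref{single_node}'s conclusion is established per node under the stated deterministic premise on $\bar{p}_w$, so I only need the events to hold jointly, not independently. A secondary subtlety is ensuring the premise $\bar{p}_w > 5 - \hat{p}$ genuinely holds throughout \emph{all} rounds of $I'$ as required by Lemma~\ref{single_node}: this is guaranteed precisely because the contradiction hypothesis posits $p_S > 5$ in every round, so the subframe-wide condition is met uniformly rather than merely at isolated rounds.
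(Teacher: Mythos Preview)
Your proposal is correct and follows the intended route: the paper does not give its own proof of Lemma~\ref{fast:recovery} but defers to \cite{disc10}, and the argument there is precisely the contradiction via Lemma~\ref{single_node} that you outline. The only point worth tightening is the probabilistic bookkeeping you flag yourself: the cleanest phrasing is that the bad event $E=\{p_S>5\text{ throughout }I'\}$ forces, for every $w\in S$, the premise of Lemma~\ref{single_node}, while simultaneously forcing at least one $w$ to violate its conclusion (else $p_S\le 1/n$ at the end), so $E\subseteq\bigcup_{w\in S}\{\text{premise}_w\wedge\neg\text{conclusion}_w\}$ and a union bound gives $\Pr[E]\le n\cdot n^{-c}$.
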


\begin{lemma}\label{lemma:helper}
For any subframe $I'$ in $I$ it holds that if $p_S \le \rho_{green}$ at the
beginning of $I'$, then $p_S \le \rho_{yellow}$ throughout $I'$,
w.m.p.\footnote{With moderate probability, or w.m.p., means a probability of
at least $1-\log^{-\Omega(1)}n$.} Similarly, if $p_S \le \rho_{yellow}$ at the
beginning of $I'$, then $p_S \le \rho_{red}$ throughout $I'$, w.m.p. The
probability bounds hold irrespective of the events outside of $S$.
\end{lemma}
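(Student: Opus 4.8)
The plan is to prove Lemma~\ref{lemma:helper} by bounding, with moderate probability, the total multiplicative growth of the aggregate sector probability $p_S$ over a single subframe $I'$ of length $f=\delta[T+(\log^3 N)/(\gamma^2\epsilon)]$, starting from the assumption that $p_S\le\rho_{green}$ (resp.\ $p_S\le\rho_{yellow}$) at the beginning of $I'$. The two claims are structurally identical---we raise the barrier by a factor of $e$, from $5$ to $5e$ and from $5e$ to $5e^2$---so it suffices to argue the generic step: if $p_S$ starts at or below a threshold $\rho$, then it stays below $e\rho$ throughout $I'$ w.m.p. The natural approach is a supermartingale / stopping-time argument. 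Let $\tau$ be the first round in $I'$ (if any) at which $p_S$ exceeds $e\rho$, and condition on the trajectory up to $\tau$. The goal is to show that reaching $e\rho$ before the end of $I'$ is a w.m.p.-rare event.

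The key mechanism is that each $p_w$ changes multiplicatively by at most a factor $(1+\gamma)^{\pm1}$ per round, so $p_S$ can increase in a round only through nodes that sense an idle channel (those that successfully receive decrease $p_w$; those that send do not change it in that round). First I would quantify the drift: in any round where $p_S$ lies in the band $[\rho,e\rho]$, the aggregate probability is large enough that the probability of \emph{some} node in $S$ transmitting is bounded below by a constant (using that any two nodes in the same sector are within each other's transmission range, so a lone sender in $S$ produces signal strength $\ge\beta\vartheta$ at every other node of $S$). Consequently a non-negligible expected mass of nodes either sends or senses a busy channel rather than idle, so the expected one-round \emph{multiplicative} increase $\E[p_S^{(t+1)}/p_S^{(t)}\mid \mathcal{F}_t]$ is at most $1+O(\gamma)$ but with a genuinely negative correction once $p_S$ is in the upper part of the band. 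I would then package $\log p_S$ (or $p_S$ itself) as a process with bounded per-step increments $O(\gamma)$ and non-positive drift in the band $[\rho,e\rho]$, and apply a Azuma/Freedman-type concentration inequality to the martingale part over the $f$ rounds.

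The crucial point making the factor-$e$ gap the right choice is that to climb from $\rho$ to $e\rho$ requires a net log-increase of $1$, i.e.\ on the order of $1/\gamma$ rounds of sustained upward $(1+\gamma)$ steps with essentially no downward corrections; the bounded-increment concentration bound then charges this a deviation probability of roughly $\exp(-\Omega((\text{net drift})^2/(f\gamma^2)))$. Here is where the length of the subframe and the value $\gamma\in O(1/(\log T+\log\log n))$ enter: since $f=\Theta(T+\log^3 N/(\gamma^2\epsilon))$ and the required deviation is $\Theta(1/\gamma)$, the exponent becomes $\Omega(\log^{\Omega(1)} n)$-ish only at the $\log^{-\Omega(1)}n$ scale, which is exactly the w.m.p.\ guarantee claimed (not w.h.p.), and explains why the lemma is stated with moderate rather than high probability. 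I would also invoke Lemma~\ref{single_node}: any node whose sector-neighborhood probability $\bar p_w$ stays above $5-\hat p$ for a full subframe is driven down to $\le 1/n^2$, which caps how much any single node can contribute to an overshoot and justifies treating the upper band as a regime with an active negative pull.

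The main obstacle I expect is establishing the negative (or at worst tightly controlled) drift \emph{uniformly over the band and independently of events outside $S$}, which is precisely the robustness clause the lemma asserts (``irrespective of the events outside of $S$''). Nodes outside $S$---including those in Zones~2 and~3 and the adversary---can only add interference, and added interference can only turn would-be idle rounds into busy ones, which \emph{helps} by suppressing increases of $p_S$; I would argue that the worst case for an overshoot is therefore the cleanest channel, so it is safe to lower-bound the downward pressure using only intra-sector sends while ignoring external noise. Making this monotonicity rigorous---that external interference never causes a node inside $S$ to increase $p_w$ when it otherwise would not---together with handling the dependence introduced by the $T_v$/$c_v$ bookkeeping (the periodic decrease when no idle step occurred in the last $T_v$ rounds, which can only further decrease $p_w$ and hence also only helps) is the delicate part of the argument; the concentration step itself is then routine given the per-round increment bound $O(\gamma)$ and the subframe length $f$.
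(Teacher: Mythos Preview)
Your high-level plan---establish a per-round drift bound on $p_S$ in the band $[\rho,e\rho]$ and then run a bounded-increment concentration argument over the $f$ rounds of $I'$---is exactly the paper's strategy, and your observation that the factor-$e$ gap corresponds to roughly $1/\gamma$ net upward steps is the right arithmetic. Where your proposal diverges from the paper, and where it has a genuine gap, is in how you neutralise events outside $S$.

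You argue monotonicity: external interference ``can only turn would-be idle rounds into busy ones,'' so the cleanest channel is the worst case for an overshoot. This is incomplete. External interference does not only convert idle$\to$busy (which suppresses increases of $p_w$ and indeed helps); it can also destroy a successful intra-sector reception by pushing the SINR below $\beta$, converting receive$\to$busy. That conversion removes a $(1+\gamma)^{-1}$ \emph{decrease} of $p_w$, i.e.\ it \emph{hurts} the upper bound. So the external environment can simultaneously reduce both the upward and the downward pressure on $p_S$, and the ``cleanest channel is worst'' reduction does not go through as stated. The paper avoids this pitfall by \emph{conditioning} on the external state rather than optimising over it: it splits $p_S=p_S^{(0)}+p_S^{(1)}+p_S^{(2)}$ according to whether a node $w\in S$, ignoring $S$'s own transmissions, would see an idle channel, would receive a message from a single external sender, or would see a busy channel regardless. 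Crossing this trichotomy with the three events $q_0,q_1,q_2$ (zero, one, or $\ge 2$ senders inside $S$) gives a nine-case expression for $\E[p'_S]$ that is valid for \emph{every} external configuration; the ``irrespective of events outside $S$'' clause then follows directly, without any monotonicity claim. If you want to repair your route, you need this decomposition (or an equivalent one) to get a drift bound that is uniform over external states.

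Two smaller points. First, your appeal to Lemma~\ref{single_node} is misplaced here: that lemma is a full-subframe statement used for Lemma~\ref{fast:recovery}, not for the per-round drift in Lemma~\ref{lemma:helper}. Second, your concentration sketch is too loose: with $f=\Theta(T+\log^3 N/(\gamma^2\epsilon))$, a naive Azuma bound $\exp(-\Theta(1)/(f\gamma^2))$ does not give $\log^{-\Omega(1)}n$; the argument in \cite{disc10} (to which the paper defers) exploits the actual negative drift in the band together with the structure of increase/decrease events, not just the increment bound.
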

\begin{proof}
It suffices to prove the lemma for the case that initially $p_S \le
\rho_{green}$ as the other case is analogous. Consider some fixed round $t$ in
$I'$. Let $p_S$ be the aggregate probability at the beginning of $t$ and
$p'_S$ be the aggregate probability at the end of $t$. Moreover, let
$p_{S}^{(0)}$ denote the aggregate probability of the nodes $w \in S$ with a total interference of less than $\vartheta$ in round $t$ when ignoring the
nodes in $S$. Similarly, let $p_{S}^{(1)}$ denote the aggregate probability
of the nodes $w \in S$ with a single transmitting node in $D_1(w) \setminus S$
and additionally an interference of less than $\vartheta$ in round $t$, and let
$p_{S}^{(2)}$ be the aggregate probability of the nodes $w \in S$ that do not
satisfy the first two cases (which implies that they will not experience an
idle channel, no matter what the nodes in $S$ will do). Certainly, $p_S =
p_S^{(0)}+p_S^{(1)}+p_S^{(2)}$. Our goal is to determine $p'_S$ in this case.
Let $q_0(S)$ be the probability that all nodes in $S$ stay silent, $q_1(S)$ be
the probability that exactly one node in $S$ is transmitting, and $q_2(S) =
1-q_0(S)-q_1(S)$ be the probability that at least two nodes in $S$ are
transmitting.

First, let us ignore the case that $c_v > T_v$ for a node $v \in S$ at round
$t$. By distinguishing 9 different cases, we obtain the following result:
$
\E[p'_S] \leq q_0(S) \cdot $ $[(1+\gamma) p_S^{(0)} + (1+\gamma)^{-1} p_S^{(1)} + p_S^{(2)}]$
        $ + q_1(S) \cdot [(1+\gamma)^{-1} p_S^{(0)} + p_S^{(1)}+p_S^{(2)}] $ $
        + q_2(S) \cdot [p_S^{(0)} + p_S^{(1)} + p_S^{(2)}] $
Just as an example, consider the case of $q_0(S)$ and $p_S^{(1)}$, i.e., all
nodes in $S$ are silent and for all nodes in $w \in S$ accounted for in
$p_S^{(1)}$ there is exactly one transmitting node in $D_1(w) \setminus S$ and
the remaining interference is less than $\vartheta$. In this case, $w$ is guaranteed
to receive a message, so according to the $\ALG$ protocol, it lowers $p_w$ by
$(1+\gamma)$.

The upper bound on $\E[p'_S]$ certainly also holds if $c_v > T_v$ for a node
$v \in S$ because $p_v$ will never be increased (but possibly decreased) in
this case.
For the rest of the proof we refer the reader to \cite{disc10}.
\end{proof}

\begin{lemma}\label{sector}
For any subframe $I'$ in $I$ it holds that if there has been at least one
round during the past subframe where $p_S \le \rho_{green}$, then throughout
$I'$, $p_S \le \rho_{red}$ w.m.p., and the probability bound holds
irrespective of the events outside of $S$.
\end{lemma}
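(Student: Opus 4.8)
The plan is to chain the two containment bounds of Lemma~\ref{lemma:helper}, using the hypothesized ``green'' round in the previous subframe as an anchor. Let $J$ denote the subframe immediately preceding $I'$, and let $t^{\ast}$ be a round in $J$ with $p_S \le \rho_{green}$, whose existence is exactly the hypothesis of the lemma. I would first argue that $p_S$ stays below $\rho_{yellow}$ from $t^{\ast}$ until the end of $J$, and then that it stays below $\rho_{red}$ throughout $I'$; composing the two estimates yields the claim.

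Concretely, the first step applies the green-to-yellow half of Lemma~\ref{lemma:helper} to the window $[t^{\ast},\ \text{end of } J]$. Since this window has length at most one subframe $f$, the expected-drift estimate on $p_S$ underlying Lemma~\ref{lemma:helper} gives $p_S \le \rho_{yellow}$ for every round in that window, w.m.p.\ In particular, at the first round of $I'$ we have $p_S \le \rho_{yellow}$. The second step then applies the yellow-to-red half of Lemma~\ref{lemma:helper} to the window $I'$ itself (again of length $f$), giving $p_S \le \rho_{red}$ throughout $I'$, w.m.p.\ A union bound over the two failure events --- each of probability $\log^{-\Omega(1)} n$ --- keeps the combined statement w.m.p., and since both invocations of Lemma~\ref{lemma:helper} hold irrespective of events outside $S$, so does the final conclusion.

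The one point requiring care, and the main obstacle, is that Lemma~\ref{lemma:helper} is phrased with the threshold hypothesis holding \emph{at the beginning} of a subframe, whereas here the green round $t^{\ast}$ may sit anywhere inside $J$. I would therefore observe that the per-round drift bound proven for Lemma~\ref{lemma:helper} does not use that its starting point is a subframe boundary: it only needs that the tracking window has length at most $f$ and that $p_S$ starts below the lower of the two thresholds. Re-anchoring the green-to-yellow bound at $t^{\ast}$ rather than at a boundary is precisely what lets the entire tail of $J$ be covered by a single green-to-yellow step, so that the random walk of $p_S$ never has to cross more than one threshold gap within any length-$f$ window. This splitting of the span $[t^{\ast},\ \text{end of } I']$ (which may be nearly $2f$ long) into two length-$f$ windows, each crossing at most one gap, is the crux of the argument; the rest is bookkeeping on the fixed constants $\rho_{green}=5$, $\rho_{yellow}=5e$, and $\rho_{red}=5e^2$.
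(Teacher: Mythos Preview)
Your proposal is correct and follows essentially the same two-step chaining of Lemma~\ref{lemma:helper} as the paper's proof: first use the green-to-yellow bound on the tail of the preceding subframe to ensure $p_S \le \rho_{yellow}$ at the start of $I'$, then the yellow-to-red bound on $I'$ itself. You are in fact more explicit than the paper about the re-anchoring subtlety (that the green round $t^{\ast}$ need not be a subframe boundary), which the paper's proof simply elides.
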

\begin{proof}
Suppose that there has been at least one round during the past subframe where
$p_S \le \rho_{green}$. Then we know from Lemma~\ref{lemma:helper} that w.m.p.
$p_S \le \rho_{yellow}$ at the beginning of $I'$. But if $p_S \le
\rho_{yellow}$ at the beginning of $I'$, we also know from
Lemma~\ref{lemma:helper} that w.m.p. $p_S \le \rho_{red}$ throughout $I'$,
which proves the lemma.
\end{proof}

Now, define a subframe $I'$ to be {\em good} if $p_S \le \rho_{red}$
throughout $I'$, and otherwise $I'$ is called {\em bad}. With the help of
Lemma~\ref{fast:recovery} and Lemma~\ref{sector} we can prove the following
lemma.

\begin{lemma}\label{lem:upper}
For any sector $S$, the expected number of bad subframes $I'$ in $I$ is at
most $1/polylog(N)$, and at most $\epsilon \beta'/6$ of the subframes $I'$ in
$I$ are bad w.h.p., where the constant $\beta'>0$ can be made arbitrarily small
depending on the constant $\delta$ in $f$. The bounds hold irrespective of the
events outside of $S$.
\end{lemma}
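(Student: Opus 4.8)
The plan is to prove both statements by reducing them to a single per-subframe estimate: that, conditioned on everything that happened before $I'$ begins, the probability that $I'$ is bad is at most $1/\mathrm{polylog}(N)$, with the exponent in the polylog tunable through $\delta$. The expectation bound and the w.h.p.\ bound both follow from this estimate.

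First I would chain Lemma~\ref{fast:recovery} and Lemma~\ref{sector}. Call a subframe \emph{green} if it contains at least one round with $p_S \le \rho_{green}$. By Lemma~\ref{sector}, if the subframe preceding $I'$ is green then $p_S \le \rho_{red}$ throughout $I'$ w.m.p., i.e.\ $I'$ is good. Contrapositively, $I'$ can be bad only if either (i) its predecessor was not green, or (ii) the w.m.p.\ event of Lemma~\ref{sector} failed, which happens with probability at most $\log^{-\Omega(1)} N$. In turn, by Lemma~\ref{fast:recovery}, a subframe fails to be green only if either $T_w > \sqrt F$ for some $w \in S$ at its start, or the w.h.p.\ event of Lemma~\ref{fast:recovery} failed (probability $\le N^{-c}$). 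Collecting these, conditioned on the history up to the start of $I'$,
$$
\Pr[I' \text{ bad}] \le \log^{-\Omega(1)} N + N^{-c} + \Pr[\exists w \in S:\ T_w > \sqrt F].
$$

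Second, I would control $T_w$. The protocol increases $T_w$ by $2$ only after a full window of $T_w$ consecutive non-idle rounds and decreases it on every idle round, so from $T_w=1$ it can reach at most $O(\sqrt t)$ after $t$ non-idle rounds; more importantly, every green round is (up to the openness of the round, which weak boundedness guarantees) an idle round that decrements $T_w$. Thus as long as green rounds keep recurring, $T_w$ is pulled back below $\sqrt F$. I would make this precise by a bootstrapping argument: assuming inductively that at most an $\epsilon\beta'/6$ fraction of the subframes seen so far were bad, the green rounds supplied by the remaining good subframes keep $T_w \le \sqrt F$ for all $w \in S$ throughout $I$ w.h.p., which keeps the third term above negligible and closes the induction. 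This self-correcting control of $T_w$ is the step I expect to be the main obstacle, since it is where the feedback among the $T_w$-dynamics, the idle-sensing behaviour, and the good/bad classification must be disentangled; it is also where the hypotheses $T_w \le \sqrt F$ of Lemmas~\ref{single_node} and~\ref{fast:recovery} are discharged.

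With the per-subframe bound $\Pr[I' \text{ bad} \mid \text{history}] \le 1/\mathrm{polylog}(N)$ in hand, the expectation bound follows by linearity over the $\delta \log N/\epsilon$ subframes of $I$: taking $\delta$ large makes the polylog exponent exceed the extra $\log N$ factor, so the expected number of bad subframes is $O\big((\log N/\epsilon)\cdot \log^{-\Omega(1)} N\big) = 1/\mathrm{polylog}(N)$. For the w.h.p.\ statement I would use that the conditional bound $1/\mathrm{polylog}(N)$ holds irrespective of events outside $S$ and irrespective of the past, so the bad-subframe indicators are stochastically dominated by independent Bernoulli variables of parameter $q = 1/\mathrm{polylog}(N)$; a Chernoff bound over the $m = \delta \log N/\epsilon$ subframes then gives that more than an $\epsilon\beta'/6$ fraction are bad with probability at most $e^{-\Omega(m)} \le N^{-c}$. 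The slack is ample because $mq = O\big((\log N/\epsilon)\log^{-\Omega(1)} N\big)$ lies far below the constant target $(\epsilon\beta'/6)m$, and enlarging $\delta$ simultaneously shrinks $q$ and the admissible $\beta'$, which is exactly the claimed dependence of $\beta'$ on $\delta$.
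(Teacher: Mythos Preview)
Your chaining of Lemma~\ref{fast:recovery} and Lemma~\ref{sector} to obtain a per-subframe failure probability of $1/\mathrm{polylog}(N)$, and the subsequent use of linearity of expectation and a Chernoff bound over stochastically dominated Bernoulli indicators, is exactly the argument the paper intends (it defers the details to \cite{disc10}, but the structure it sets up points to precisely this route).

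The genuine gap is in your treatment of the hypothesis $T_w \le \sqrt{F}$. You try to discharge it \emph{inside} the proof of Lemma~\ref{lem:upper} by a bootstrapping argument: good subframes supply green rounds, green rounds give idle rounds, idle rounds decrement $T_w$. But this chain breaks at the second link. A ``green'' round only means $p_S \le \rho_{green}$; whether $w$ actually senses an idle channel depends on the total interference at $w$, i.e.\ on transmissions from Zones~2 and~3 and on ${\cal ADV}(w)$, none of which lie in $S$. Weak boundedness controls only ${\cal ADV}(w)$, not the interference from nodes outside $S$. So you cannot argue that $T_w$ is decremented in green rounds while simultaneously claiming the bound holds ``irrespective of the events outside of $S$''---an adversarial choice of transmissions outside $S$ could keep $w$'s channel busy indefinitely and drive $T_w$ past $\sqrt{F}$.

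The paper resolves this differently: it treats $T_w \le \sqrt{F}$ as an implicit standing assumption for Lemmas~\ref{single_node}, \ref{fast:recovery}, and \ref{lem:upper}, and then discharges it \emph{globally} in the Throughput subsection (see Lemma~\ref{lem:tvbound} and the two lemmas following it, culminating in ``Hence, the upper bounds on $T_v$ that we assumed earlier are valid w.h.p.''). That later argument is allowed to use Lemmas~\ref{lem:upper}, \ref{lem:zone2}, and \ref{lem:zone3} together, because at that point events outside $S$ are no longer quantified adversarially but controlled by those lemmas. The circularity you worried about is therefore broken at the level of the time-frame induction (start of $J$ has $T_v \le \sqrt{F}/2$ $\Rightarrow$ end of $J$ has $T_v \le \sqrt{F}/2$), not inside Lemma~\ref{lem:upper}. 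Your proof becomes correct if you simply carry $T_w \le \sqrt{F}$ as a hypothesis and drop the bootstrapping paragraph.
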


The proof can be found in \cite{disc10}.
Since we have exactly 6 sectors, it follows from Lemma~\ref{lem:upper} that
apart from an $\epsilon \beta'$-fraction of the subframes, all subframes $I'$
in $I$ satisfy $\sum_{v \in D_1(u)} p_v \leq 6\rho$ throughout $I'$ w.h.p.

\subsection{Zone 3}

Next, we consider Zone~3. We will show that although the aggregate probability
of the nodes in Zone~3 may be high (for some distributions of nodes in the
space it can actually be as high as $\Omega(n)$), their influence (or noise)
at node $v$ is limited if the radius of Zone 2 is sufficiently large. Thus,
probabilities recover quickly in Zone~1 and there are many opportunities for
successful receptions.

In order to bound the interference from Zone 3, we divide Zone 3 into two
sub-zones: $Z_3^-$, which contains all nodes from Zone 3 up to a radius of
$\bigO {\log^2 n}$, and $Z_3^+$, which contains all remaining nodes in Zone 3.
For Zone $Z_3^-$ we can prove the following lemma.

\begin{lemma}\label{lem:zone3-}
At most an $\epsilon \beta$-fraction of the subframes $I'$ in $I$ are bad for
some $R_1$-disk in Zone $Z_3^-$ w.h.p., where the constant $\beta'>0$ can be
made arbitrarily small depending on the constant $\delta$ in $f$.
\end{lemma}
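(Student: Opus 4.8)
The plan is to reduce Lemma~\ref{lem:zone3-} to the single-sector bound of Lemma~\ref{lem:upper} by covering $Z_3^-$ with constant-size disks and then absorbing a polylogarithmic union cost into the subframe-length constant $\delta$. First I would use that $R_1=\sqrt[\alpha]{P/(\beta\vartheta)}$ is a constant (Definition~\ref{def:zones}), whereas $Z_3^-$ has radius $\bigO{\log^2 n}$ and hence area $\bigO{\log^4 n}$; therefore $Z_3^-$ is covered by $m=\bigO{\log^4 n}$ disks of radius $R_1$, each split into six equal-angle sectors as in the Zone~1 analysis. Because a $60^\circ$ sector of an $R_1$-disk has pairwise distance at most $2R_1\sin 30^\circ=R_1$, any two nodes sharing a sector lie in each other's transmission range, so the sector machinery of Lemmas~\ref{single_node}--\ref{lem:upper} applies verbatim to each of the $6m=\bigO{\log^4 n}$ sectors no matter where its disk is centered. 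Since a subframe in which every covering sector satisfies $p_S\le\rho_{red}$ leaves every $R_1$-disk of $Z_3^-$ with aggregate probability at most $6\rho_{red}$, it suffices to upper bound the number of subframes that are bad for at least one sector.

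Next I would feed each sector into Lemma~\ref{lem:upper}, using two features. Its bounds hold \emph{irrespective of events outside the sector}, so they apply simultaneously to all $\bigO{\log^4 n}$ (mutually coupled) sectors; and it bounds the \emph{expected} number of bad subframes per sector by $1/polylog(N)$, where --- just like the fraction $\beta'$ --- the degree of this polylog can be increased by enlarging $\delta$, since a longer subframe $f=\delta[T+(\log^3 N)/(\gamma^2\epsilon)]$ gives $p_S$ more time to recover and makes bad subframes rarer. I would choose $\delta$ large enough that this polylog dominates $\log^4 n$.

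Let $X$ count the subframes of $I$ that are bad for at least one sector. By linearity over the $\bigO{\log^4 n}$ sectors, $\E[X]\le \bigO{\log^4 n}/polylog(N)$, which is negligible once $\delta$ is fixed as above. To upgrade this small mean to the w.h.p.\ statement $X\le \epsilon\beta\cdot(\#\text{subframes})$, where $\#\text{subframes}=\delta\log N/\epsilon=\Theta((\log N)/\epsilon)$, I would reuse the concentration argument behind the w.h.p.\ half of Lemma~\ref{lem:upper}: the recovery property of Lemmas~\ref{fast:recovery} and~\ref{sector} lets me treat a fixed sector's per-subframe badness as a sequence of rare events controllable by a Chernoff/stochastic-domination bound across the $\Theta((\log N)/\epsilon)$ subframes, and the ``irrespective of outside events'' clause lets me add these contributions across sectors without assuming independence. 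The constant $\beta$ (through $\beta'$ and $\delta$) can then be driven arbitrarily small while $N=\max\{n,T\}$, so the $\bigO{\log^4 n}$ union cost is swallowed and the $\epsilon\beta$-fraction bound holds with probability $1-1/n^{\Omega(1)}$.

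The hard part will be this last concentration step, since the badness indicators are dependent both across subframes (a sector's protocol state persists from one subframe to the next) and across the overlapping sectors tiling $Z_3^-$, so a bare Markov bound on $X$ is far too weak for a $1-1/n^{c}$ guarantee. The way around it is to keep $\E[X]$ super-polynomially below the target threshold by inflating $\delta$ until $polylog(N)\gg\log^4 n$, and then to lift this to high probability using exactly the recovery-based concentration that proves the w.h.p.\ half of Lemma~\ref{lem:upper}, whose per-sector guarantees are valid regardless of the rest of the network.
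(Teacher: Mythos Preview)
Your proposal matches the paper's proof almost exactly: cover $Z_3^-$ by $d=\bigO{\log^4 n}$ many $R_1$-disks, invoke Lemma~\ref{lem:upper} per disk (using that its bounds hold irrespective of events outside), absorb the $\log^4 n$ union cost into the $1/polylog(N)$ expectation by choosing $\delta$ large enough, and then upgrade to w.h.p.\ via a concentration argument grounded in the recovery mechanism of Lemmas~\ref{fast:recovery} and~\ref{sector}. The one place where the paper is more concrete than you is the last step: rather than leaving it as a generic ``Chernoff/stochastic-domination'' argument, the paper states (citing the techniques behind Lemma~\ref{lem:upper} in \cite{disc10}) that for each disk $D$ the probability of having $k$ bad subframes is at most $1/polylog(N)^k$, i.e., a geometric tail, and then applies Chernoff bounds for sums of geometric random variables to get that each disk individually has at most an $\epsilon\beta'/d$-fraction of bad subframes w.h.p., so the union over the $d$ disks yields the claimed $\epsilon\beta'$-fraction.
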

\begin{proof}
The claim follows from the fact that the radius of Zone $Z_3^-$ is $\bigO
{\log^2 n}$ and hence $d=\bigO {\log^4 n}$ disks of radius $R_1$ are
sufficient to cover the entire area of $Z_3^-$. According to
Lemma~\ref{lem:upper}, over all of these disks, the expected number of bad
subframes is at most $1/polylog(N)$. Using similar techniques as for the proof
of Lemma~\ref{lem:upper} in \cite{disc10}, it can also be shown that for each
disk $D$, the probability for $D$ to have $k$ bad subframes is at most
$1/polylog(N)^k$ irrespective of the events outside of $D$. Hence, one can use
Chernoff bounds for sums of identically distributed geometric random variables
to conclude that apart from an $\epsilon \beta'/d$-fraction of the subframes,
all subframes $I'$ in $I$ satisfy $\sum_{v \in D} p_v \leq 6\rho$ throughout
$I'$ w.h.p. This directly implies the lemma.
\end{proof}

Suppose that $R_2 = c \cdot R_1$. Lemma~\ref{lem:zone3-} implies that in a
good subframe the expected noise level at any node $w \in D_1(v)$ created by
transmissions in Zone $Z_3^-$ is upper bounded by
\[
  6 \rho_{red} \cdot \sum_{d = (c-1)}^{\bigO{\log^2 n}} \frac{2\pi (d+1)}{\sqrt{2} (d
  R_1)^{\alpha}}
  \le \frac{12 \pi \rho_{red}}{\alpha-1} \cdot \frac{1}{(c-2)^{\alpha-2}
  R_1^{\alpha}}
\]
which is at most $\epsilon \vartheta/4$ if $c=O((1/\epsilon)^{1/(\alpha-2)})$
is sufficiently large. In order to bound the noise level at any node $w \in
D_1(v)$ from Zone $Z_3^+$, we prove the following claim.

\begin{claim}
Consider some fixed $R_1$-disk $D$. If at the beginning of time frame $I$,
$T_w \le \sqrt{F}$ for all $w \in D$, then for all time steps except for the
first subframe in $I$, $p_D \in \bigO {\log n}$, w.h.p.
\end{claim}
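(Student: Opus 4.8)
The plan is to reduce the claim to a per-sector statement and then exploit a simple monotonicity property of the aggregate sending probability. Exactly as in the Zone~1 analysis, I would partition $D$ into six sectors $S_1,\dots,S_6$ of equal angle, so that any two nodes in the same sector lie within each other's transmission range. Since $p_D=\sum_{i=1}^{6}p_{S_i}$, it suffices to show that for a fixed sector $S$ we have $p_S\in\bigO{\log n}$ at all time steps after the first subframe w.h.p., and then sum over the six sectors.

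The heart of the argument is the observation that $p_S$ can grow in a round only when \emph{every} node of $S$ stays silent. Reusing the classification from the proof of Lemma~\ref{lemma:helper}, write $p_S=p_S^{(0)}+p_S^{(1)}+p_S^{(2)}$. A node raises its probability only if it senses an idle channel; but a node counted in $p_S^{(1)}$ or $p_S^{(2)}$ already sees interference $\ge\vartheta$ from outside $S$, and a node counted in $p_S^{(0)}$ sees a busy channel as soon as a single node of $S$ transmits, because the sector property guarantees that one in-sector transmitter contributes at least $\beta\vartheta$. The periodic decrease triggered by $c_v>T_v$ only lowers $p_S$ and, firing exactly when no idle step occurred, never coincides with an increase. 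Hence in any round $p'_S>p_S$ forces all nodes of $S$ to be silent, an event of probability $\prod_{w\in S}(1-p_w)\le e^{-p_S}$, and in that case $p'_S\le(1+\gamma)p_S$.

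Fix a threshold $\Lambda=c\log n$ with $c$ a sufficiently large constant. Whenever $p_S\ge\Lambda$, the probability that $p_S$ increases in that round is at most $e^{-\Lambda}\le n^{-c}$, so a union bound over the at most $\poly(N)$ rounds of the time frame $I$ shows that, w.h.p., $p_S$ never increases while it is at or above $\Lambda$ (taking the threshold a large multiple of $\log N$ makes the union bound explicit, which in the regime considered is $\bigO{\log n}$). Consequently, once $p_S$ drops below $\Lambda$ it can re-cross the threshold only by a single upward step from a value below $\Lambda$, and therefore never exceeds $(1+\gamma)\Lambda$ thereafter. It remains to reach a value below $\Lambda$ once, early on: the hypothesis $T_w\le\sqrt{F}$ for all $w\in D$ at the start of $I$ lets me invoke Lemma~\ref{fast:recovery} on the first subframe, which yields a round in that subframe with $p_S\le\rho_{green}<\Lambda$ w.h.p. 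Combining these facts, for every time step after the first subframe $p_S\le(1+\gamma)\Lambda=\bigO{\log n}$ w.h.p., and summing over the six sectors gives $p_D\in\bigO{\log n}$.

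The only delicate point I expect is the monotonicity step: one has to verify carefully, in the same spirit as the nine-case analysis behind Lemma~\ref{lemma:helper}, that no node of $S$ can register an idle channel while another node of the same sector transmits, and that the $c_v>T_v$ rule cannot produce a net increase. Once this is in place the remainder is a routine union bound, and the $\log n$ slack (rather than a constant) is precisely what is needed to push the per-round increase probability $e^{-p_S}$ below the inverse-polynomial level required for a w.h.p.\ statement over the entire frame.
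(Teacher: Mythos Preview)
Your proof is correct and follows the same two-step strategy as the paper: invoke Lemma~\ref{fast:recovery} on the first subframe to drop the aggregate probability to a constant, then observe that once the aggregate probability reaches $\Theta(\log n)$ it cannot increase w.h.p.\ because enough nodes transmit to keep every node busy, and union-bound over the remaining rounds of the frame. The only difference is granularity---you argue per sector (where a single in-sector transmitter already makes the whole sector busy, so silence has probability $\le e^{-p_S}$), whereas the paper argues directly at the disk level (if $p_D=\Omega(\log n)$ then w.h.p.\ $\Omega(\log n)$ nodes in $D$ transmit, and since any two nodes of $D$ lie within $2R_1$ of each other their combined signal exceeds $\vartheta$ at every node); this is a cosmetic variation of the same mechanism.
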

\begin{proof}
Lemma~\ref{fast:recovery} implies that there must be a time step $t$ in the
first subframe of $I$ with $p_D \le 6 \rho_{green}$ w.h.p. Since for $p_D \in
\Omega(\log{n})$ at least a logarithmic number of nodes in $D$ transmit and
therefore every node sees a busy channel, w.h.p., and $p_D$ can only increase
if a node sees an idle channel, $p_D$ is bounded by $\bigO{\log n}$ for the
rest of $I$ w.h.p.
\end{proof}

The claim immediately implies the following result.

\begin{lemma} \label{lem:zone3+}
If at the beginning of time frame $I$, $T_w \le \sqrt{F}$ for all $w$, then
for all time steps except for the first subframe in $I$, the interference at
any node $w \in D_1(v)$ due to transmissions in $Z_3^+$ is at most $\epsilon
\vartheta/4$ w.h.p.
\end{lemma}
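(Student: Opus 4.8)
The goal is to bound the aggregate interference at a node $w \in D_1(v)$ coming from the far zone $Z_3^+$, i.e.\ from all nodes at distance $\Omega(\log^2 n)$ from $v$. The plan is to combine the per-disk probability bound just established in the Claim (namely $p_D \in \bigO{\log n}$ for every $R_1$-disk $D$, away from the first subframe) with a geometric summation over concentric rings, exactly mirroring the computation already carried out for $Z_3^-$ above the statement of this lemma.

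First I would tile $Z_3^+$ by $R_1$-disks, grouped into concentric rings centered at $v$ of geometrically (or linearly) increasing radius, using the same packing argument as in the $Z_3^-$ bound: a ring at distance $d\cdot R_1$ can be covered by $\bigO{d}$ disks of radius $R_1$. Conditioning on the high-probability event of the Claim, each such disk contributes aggregate sending probability at most $\bigO{\log n}$, so each transmitting node in that ring delivers received power at most $P/(d R_1)^\alpha$ at $w$. Multiplying the per-disk probability bound by the $\bigO{d}$ disks in the ring and by the $1/(dR_1)^\alpha$ distance attenuation gives a per-ring contribution of order $\log n \cdot d /(d R_1)^{\alpha}= \bigO{\log n}\cdot d^{1-\alpha} R_1^{-\alpha}$. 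Summing $\sum_{d \ge c \log^2 n} d^{1-\alpha}$ converges since $\alpha > 2$, and the lower summation limit $\bigOmega{\log^2 n}$ is what makes the whole sum small: the tail behaves like $\bigO{(\log^2 n)^{2-\alpha}}=\bigO{(\log n)^{2(2-\alpha)}}$, which kills the stray $\log n$ factor from $p_D$ and drives the total interference below the target $\epsilon\vartheta/4$.

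The main obstacle is handling the probabilistic dependencies correctly when passing from ``expected noise'' to a w.h.p.\ statement. The Claim gives $p_D \in \bigO{\log n}$ w.h.p.\ for a \emph{fixed} disk $D$; I would first take a union bound over the polynomially many relevant disks in $Z_3^+$ (those within distance $\poly(n)$ of $v$ contain all $n$ nodes) so that the bound holds simultaneously for all disks w.h.p. Given this event, the received power from each ring is a sum of independent Bernoulli contributions with the stated probabilities, so I would bound the expected total interference by the convergent series above and then invoke a Chernoff-type concentration bound to show the actual interference does not exceed twice its expectation w.h.p. The only subtlety is that the $p_w$ values are themselves random and correlated across nodes; but since the Claim already conditions on the aggregate $p_D$ being $\bigO{\log n}$, conditioned on that event the indicator variables for individual transmissions are independent across nodes within a round, which is exactly what the Chernoff bound requires.

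Finally I would assemble the pieces: choosing the constant $c$ in $R_2 = c R_1$ and the polylog radius threshold separating $Z_3^-$ from $Z_3^+$ so that both the $Z_3^-$ bound (already shown to be at most $\epsilon\vartheta/4$) and this $Z_3^+$ bound hold simultaneously, for every node $w \in D_1(v)$ and every time step outside the first subframe of $I$, w.h.p. Since the lemma explicitly excludes the first subframe, I can freely use the Claim's conclusion throughout, and the hypothesis $T_w \le \sqrt{F}$ is precisely what the Claim requires, so no additional assumptions are needed.
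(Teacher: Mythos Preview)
Your approach is exactly what the paper intends: its entire proof of this lemma is the single sentence ``The claim immediately implies the following result,'' leaving the reader to combine the Claim's per-disk bound $p_D=\bigO{\log n}$ with the same ring summation already carried out for $Z_3^-$. You correctly spell out the two steps the paper suppresses---the union bound over the polynomially many $R_1$-disks covering $Z_3^+$ and the Chernoff-type concentration needed to pass from expected to actual interference---so your write-up is, if anything, more complete than the paper's.
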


Hence, we get:

\begin{lemma} \label{lem:zone3}
If at the beginning of time frame $I$, $T_w \le \sqrt{F}$ for all $w$, then at
most an $\epsilon \beta$-fraction of the subframes in $I$ contain time steps
in which the expected interference at any node $w \in D_1(v)$ due to
transmissions in Zone 3 is at least $\epsilon \vartheta/2$.
\end{lemma}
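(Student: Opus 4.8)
The plan is to derive the bound by combining the two estimates already established for the inner and outer parts of Zone~3. Since $Z_3^-$ and $Z_3^+$ partition Zone~3, the expected interference at a node $w\in D_1(v)$ from all of Zone~3 splits additively into a $Z_3^-$ contribution and a $Z_3^+$ contribution. I would therefore show that, outside a small fraction of the subframes, \emph{each} of these two contributions is at most $\epsilon\vartheta/4$, so that their sum stays below the target threshold $\epsilon\vartheta/2$.

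For the outer part, Lemma~\ref{lem:zone3+} already supplies exactly what is needed: under the hypothesis $T_w\le\sqrt{F}$ for all $w$ at the start of $I$, the expected interference at any $w\in D_1(v)$ from $Z_3^+$ is at most $\epsilon\vartheta/4$ w.h.p.\ in \emph{every} time step except those in the first subframe of $I$. Thus the $Z_3^+$ part can exceed $\epsilon\vartheta/4$ only within a single subframe. For the inner part, I would invoke Lemma~\ref{lem:zone3-} together with the ring-packing estimate stated immediately after it: calling a subframe \emph{$Z_3^-$-good} when every $R_1$-disk $D$ in $Z_3^-$ satisfies $\sum_{u\in D}p_u\le 6\rho_{red}$ throughout it, Lemma~\ref{lem:zone3-} guarantees that all but an $\epsilon\beta$-fraction of the subframes in $I$ are $Z_3^-$-good w.h.p., and in any such subframe the packing bound caps the expected $Z_3^-$ interference at $w$ by $\epsilon\vartheta/4$, provided $R_2=c\cdot R_1$ with $c=\bigO{(1/\epsilon)^{1/(\alpha-2)}}$ chosen sufficiently large.

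Putting the pieces together, a subframe can contain a time step whose expected Zone~3 interference reaches $\epsilon\vartheta/2$ only if at that step the $Z_3^-$ part or the $Z_3^+$ part individually exceeds $\epsilon\vartheta/4$. By the two estimates above, this forces the subframe to be either $Z_3^-$-bad or the first subframe of $I$. A union bound over these two possibilities shows that the number of offending subframes is at most the $\epsilon\beta$-fraction from Lemma~\ref{lem:zone3-} plus one extra subframe. Since $I$ consists of $\delta\log N/\epsilon$ subframes, this single extra subframe is a vanishing additive term that is absorbed into the $\epsilon\beta$-fraction by a slight adjustment of the constant $\delta$ (equivalently of $\beta$), yielding the claim w.h.p.

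The main obstacle is not any individual inequality — the heavy lifting is done in Lemmas~\ref{lem:zone3-} and~\ref{lem:zone3+} — but rather making the two ``exceptional'' sets combine cleanly: one must check that the additive single-subframe slack coming from $Z_3^+$ does not spoil the multiplicative $\epsilon\beta$-fraction guarantee, which is precisely where the choice $|I|=\delta\log N/\epsilon$ with a large constant $\delta$ is exploited. A secondary point to confirm is that the w.h.p.\ failure probabilities of the two invoked lemmas survive the union bound over all $R_1$-disks in $Z_3^-$ and all subframes of $I$, which holds because there are only polynomially many of each.
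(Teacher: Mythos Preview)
Your proposal is correct and follows exactly the approach the paper intends: the paper does not give a separate proof of Lemma~\ref{lem:zone3} but simply writes ``Hence, we get:'' after Lemma~\ref{lem:zone3+}, indicating that the result is obtained by combining Lemma~\ref{lem:zone3-} (together with the ring-packing estimate that yields the $\epsilon\vartheta/4$ bound for $Z_3^-$ in good subframes) with Lemma~\ref{lem:zone3+}. Your handling of the union bound and of the single extra subframe from the $Z_3^+$ part is in fact more explicit than what the paper provides.
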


\subsection{Zone 2}

For Zone $Z_2$ we can prove the following lemma in the same way as
Lemma~\ref{lem:zone3-}.

\begin{lemma}\label{lem:zone2}
At most an $\epsilon \beta$-fraction of the subframes $I'$ in $I$ are bad for
some $R_1$-disk in Zone 2, w.h.p., where the constant $\beta>0$ can be made
arbitrarily small depending on the constant $\delta$ in $f$.
\end{lemma}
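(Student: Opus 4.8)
The plan is to mirror the proof of Lemma~\ref{lem:zone3-} almost verbatim, the only substantive change being the covering number of the region in question. First I would bound the number of $R_1$-disks needed to cover Zone~2. Since $R_1 \in \bigO{1}$ and $R_2 = c \cdot R_1$ with $c \in \bigO{(1/\epsilon)^{1/(\alpha-2)}}$, Zone~2 has area at most $\pi R_2^2 = \pi c^2 R_1^2$, so a number $d' \in \bigO{(1/\epsilon)^{2/(\alpha-2)}}$ of $R_1$-disks suffices to cover it. The key observation is that, in contrast to the $\bigO{\log^4 n}$ disks required for $Z_3^-$, here $d'$ is a \emph{constant} depending only on $\epsilon$ and $\alpha$, which if anything simplifies the argument.

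Next I would invoke Lemma~\ref{lem:upper} for each such disk $D$. Because the bound in that lemma holds irrespective of events outside the sector (and hence outside $D$), it applies to every $R_1$-disk placed in Zone~2: the expected number of bad subframes for $D$ is at most $1/polylog(N)$, and since there are only $d' \in \bigO{1}$ disks, the total expected number of bad subframes remains $1/polylog(N)$. Exactly as in Lemma~\ref{lem:zone3-}, I would then carry over the per-disk tail estimate from \cite{disc10}, namely that for each fixed disk $D$ the probability of incurring $k$ bad subframes is at most $1/polylog(N)^k$, independent of events outside $D$.

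Finally, I would apply Chernoff bounds for sums of identically distributed geometric random variables, once per disk, to promote these per-disk tail bounds into a high-probability statement: apart from an $\epsilon\beta/d'$-fraction of the subframes per disk, every $R_1$-disk in Zone~2 satisfies $\sum_{v \in D} p_v \le 6\rho$ throughout each subframe w.h.p.; a union bound over the $d'$ disks leaves an $\epsilon\beta$-fraction of bad subframes overall. Since the expected bad-subframe count scales with $1/polylog(N)$, which shrinks as the constant $\delta$ hidden in $f$ grows, $\beta$ can be made arbitrarily small by enlarging $\delta$, as claimed. The only genuine obstacle is the per-disk tail bound $1/polylog(N)^k$ together with its independence from the outside, but this is precisely the ingredient already established for Lemma~\ref{lem:upper} in \cite{disc10}, so relative to Lemma~\ref{lem:zone3-} no new probabilistic machinery is needed beyond the elementary covering-number computation above.
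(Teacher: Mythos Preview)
Your proposal is correct and matches the paper's own approach: the paper simply states that Lemma~\ref{lem:zone2} is proved ``in the same way as Lemma~\ref{lem:zone3-},'' and your write-up spells out exactly that argument with the covering number replaced by the constant $d' \in \bigO{(1/\epsilon)^{2/(\alpha-2)}}$. The observation that $d'$ is constant (rather than polylogarithmic) only makes the union-bound step easier, so nothing new is required.
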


\subsection{Throughput}

Given the upper bounds on the aggregate probabilities and interference, we
are now ready to study the throughput of $\ALG$. For this we first need to
show an upper bound on $T_v$ in order to avoid long periods of high $p_v$
values. Let $J$ be a time interval that has a quarter of the length of a time
frame, i.e., $|J|=F/4$. We start with the following lemma whose proof is
identical to Lemma III.6 in \cite{icdcs11stefan}.

\begin{lemma} \label{cl_Tinc}
If in subframe $I'$ the number of idle time steps at $v$ is at most $k$, then
node $v$ increases $T_v$ by 2 at most $k/2+\sqrt{f}$ many times in $I'$.
\end{lemma}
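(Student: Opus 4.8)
The plan is to follow the evolution of the threshold $T_v$ inside the subframe $I'$ and to charge each of its increases simultaneously against the limited number $k$ of idle steps and against the time budget $f=|I'|$. Recall that, by the protocol, $T_v$ moves in only two ways: it drops by $1$ (floored at $1$) on every round in which $v$ senses an idle channel, and it grows by $2$ exactly when a counter period completes ($c_v>T_v$) with no idle step among the preceding $T_v$ rounds. Call such a counter period an \emph{increase cycle}. The structural fact that makes the argument work is that an increase cycle by definition contains no idle step, and idle steps are the only events that lower $T_v$; hence $T_v$ is constant throughout an increase cycle, and the number of rounds it spans equals that constant value of $T_v$.

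Concretely, I would list the increase cycles of $I'$ in chronological order and write $\tau_1,\tau_2,\dots,\tau_m$ for the value of $T_v$ during each one, so that $m$ is the quantity to be bounded and $\tau_i$ is also the length (in rounds) of the $i$-th increase cycle. Because these cycles are pairwise disjoint sub-intervals of $I'$, their lengths sum to at most $f$:
\[
  \sum_{i=1}^m \tau_i \;\le\; f .
\]
To lower-bound the $\tau_i$, note that between the $i$-th and $(i{+}1)$-st increase the only changes to $T_v$ are the single $+2$ of the $i$-th increase and the $-1$ of each intervening idle step (ordinary, non-increasing counter periods leave $T_v$ untouched). Writing $d_i$ for the number of idle steps falling strictly between these two increases, this gives $\tau_{i+1}\ge \tau_i+2-d_i$, and since $\sum_i d_i$ is at most the total number $k$ of idle steps in $I'$, telescoping from $\tau_1\ge 1$ yields $\tau_i\ge 2i-1-k$.

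Combining the two estimates closes the proof. For $i=\lceil k/2\rceil+s$ with $s\ge 1$ the lower bound gives $\tau_{\lceil k/2\rceil+s}\ge 2s-1$ (both parities of $k$ are covered, since for odd $k$ the bound is only stronger), so
\[
  f \;\ge\; \sum_{i=1}^m \tau_i \;\ge\; \sum_{s=1}^{\,m-\lceil k/2\rceil}(2s-1) \;=\; \bigl(m-\lceil k/2\rceil\bigr)^2 .
\]
Taking square roots gives $m\le \lceil k/2\rceil+\sqrt f$, which is the claimed bound $k/2+\sqrt f$ up to the integer rounding of $k/2$. The one step that requires genuine care is the structural claim that an increase cycle has constant $T_v$ and therefore occupies exactly $\tau_i$ rounds, as this is what lets me treat the increase cycles as disjoint intervals whose lengths sum to at most $f$; the floor at $1$ in the decrease rule is harmless, since whenever it is active the effective number of decreases is below $k$, which only strengthens the bound on the $\tau_i$.
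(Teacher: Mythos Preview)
Your argument is correct and is the natural counting proof for this lemma: charge each increase cycle against its own length (these are disjoint and together consume at most $f$ rounds), and observe that the lengths $\tau_i$ form a sequence that grows by $2$ per step minus the number of intervening idle rounds, so after roughly $k/2$ increases the $\tau_i$ dominate $1,3,5,\dots$, forcing $(m-\lceil k/2\rceil)^2\le f$. The only place to be slightly more careful than you were is the boundary: the first increase cycle that \emph{ends} in $I'$ may have \emph{started} before $I'$, so strictly speaking only $\sum_{i\ge 2}\tau_i\le f$ is guaranteed; this costs at most an additive $+1$ in the final bound and is absorbed by the same slack you already note for $\lceil k/2\rceil$ versus $k/2$.

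As for comparison with the paper: the paper does not actually prove this lemma here but states that the proof is identical to Lemma~III.6 of the cited \textsc{icdcs} paper. Your argument is precisely the standard one for this kind of statement (disjoint increase cycles of growing length packed into $f$ rounds), so it is safe to say your approach coincides with the intended proof.
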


Next, we show the following lemma.

\begin{lemma} \label{lem:tvbound}
If at the beginning of $J$, $T_v \le \sqrt{F}/2$ for all nodes $v$, then every
node $v$ has at least $2^{-\bigO{(1/\epsilon)^{2/(\alpha-2)}}} |J|$
time steps in $J$ in which it senses an idle channel, w.h.p.
\end{lemma}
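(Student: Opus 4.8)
The plan is to prove that in a $2^{-\bigO{(1/\epsilon)^{2/(\alpha-2)}}}$-fraction of the steps of $J$ the channel at $v$ is genuinely idle, by combining three ingredients: the adversary's limited budget forces enough \emph{potentially non-busy} steps; the zone lemmas keep the aggregate sending probability around $v$ bounded; and on such a step $v$ sees an idle channel with probability $2^{-\bigO{(1/\epsilon)^{2/(\alpha-2)}}}$. First I would lower bound the number of potentially non-busy steps at $v$. Since the adversary is $((1-\epsilon')\vartheta,T)$-bounded with $\epsilon'=2\epsilon$, over every window of $T$ steps its total budget at $v$ is at most $(1-2\epsilon)\vartheta\cdot T$; a step with ${\cal ADV}(v)\ge(1-\epsilon)\vartheta$ costs at least $(1-\epsilon)\vartheta$, so at most a $\tfrac{1-2\epsilon}{1-\epsilon}$ fraction of the steps in each window can be potentially busy. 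Hence at least an $\tfrac{\epsilon}{1-\epsilon}\ge\epsilon$ fraction of the steps of $J$ -- at least $\epsilon|J|$ steps -- satisfy ${\cal ADV}(v)\le(1-\epsilon)\vartheta$.

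Next I invoke the good-subframe machinery. Combining Lemma~\ref{lem:upper} (Zone~1), Lemma~\ref{lem:zone2} (Zone~2) and Lemma~\ref{lem:zone3} (Zone~3), all but an $\bigO{\epsilon\beta'}$ fraction of the subframes of $J$ are, w.h.p., simultaneously good for every $R_1$-disk inside $D_2(v)$ and have expected Zone~3 interference at $v$ below $\epsilon\vartheta/2$. In a good subframe the aggregate probability of each $R_1$-disk is at most $6\rho_{red}$; covering $D_2(v)$, whose radius is $R_2=\bigO{(1/\epsilon)^{1/(\alpha-2)}R_1}$, by $\bigO{(1/\epsilon)^{2/(\alpha-2)}}$ such disks yields $p_2=\sum_{u\in D_2(v)}p_u=\bigO{(1/\epsilon)^{2/(\alpha-2)}}$. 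This covering is exactly where the exponent $2/(\alpha-2)$ of the final competitive ratio is born.

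With these bounds I estimate the per-step idle probability. Fix a potentially non-busy step $t$ inside a good subframe and consider the events $A=\{v$ stays silent$\}$, $B=\{$no node of $D_2(v)\setminus\{v\}$ transmits$\}$ and $C=\{$the Zone~3 interference at $v$ is below $\epsilon\vartheta\}$. The adversary commits ${\cal ADV}(v)\le(1-\epsilon)\vartheta$ before the nodes flip their coins, so $A,B,C$ are determined by disjoint sets of independent transmission decisions and are mutually independent. On $A\cap B\cap C$ the total interference at $v$ is at most $(1-\epsilon)\vartheta+0+\epsilon\vartheta<\vartheta$ (recall a single sender in $D_1\subseteq D_2$ would already exceed $\vartheta$, which is why $B$ must silence all of $D_2$), so $v$ senses idle. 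Here $\Pr[A]\ge 1-\hat p$; using $1-x\ge 4^{-x}$ on each factor gives $\Pr[B]\ge 4^{-p_2}=2^{-\bigO{(1/\epsilon)^{2/(\alpha-2)}}}$; and Markov's inequality applied to Lemma~\ref{lem:zone3} gives $\Pr[C]\ge 1/2$. Multiplying, each good potentially non-busy step is idle at $v$ with probability at least $\mu:=2^{-\bigO{(1/\epsilon)^{2/(\alpha-2)}}}$, and this holds conditioned on the entire past since fresh coins are used at $t$. Discarding the $\bigO{\epsilon\beta'}|J|$ steps inside bad subframes (a small fraction, by taking $\delta$ large) still leaves $\ge\tfrac{\epsilon}{2}|J|$ such steps, so the number of idle steps stochastically dominates a $\mathrm{Binomial}(\tfrac{\epsilon}{2}|J|,\mu)$ variable; a Chernoff bound, valid because $|J|=\Omega((\log N)\,f/\epsilon)$ is polynomially large, produces at least $\tfrac{\epsilon}{4}\mu|J|$ idle steps w.h.p. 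Since $\epsilon\ge 2^{-\bigO{(1/\epsilon)^{2/(\alpha-2)}}}$, the factor $\tfrac{\epsilon}{4}$ is absorbed into the exponent, giving the claimed $2^{-\bigO{(1/\epsilon)^{2/(\alpha-2)}}}|J|$.

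The main obstacle is the loose end that the zone lemmas require $T_w\le\sqrt F$ throughout $J$, and I must show the hypothesis $T_v\le\sqrt F/2$ is enough to guarantee this. The key observation is that once $T_v\ge\sqrt F/2$, each increment of $T_v$ needs a full counting window of at least $\sqrt F/2$ consecutive non-idle rounds, so a subframe of length $f$ admits only $\bigO{f/\sqrt F}$ increments; summing the resulting growth over the $\delta(\log N)/(4\epsilon)$ subframes of $J$ keeps $T_v$ below $\sqrt F$, which is precisely the slack the $\sqrt F/2$ buffer was designed to supply (Lemma~\ref{cl_Tinc} supplies the complementary bound in the regime where $T_v$ is still small, and simultaneously shows that abundant idle steps would only shrink $T_v$, so the threshold cannot run away). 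A clean way to organize this is a stopping-time argument that stops at the first round where $T_v$ would exceed $\sqrt F$ and shows this round does not occur within $J$. The two places needing the most care are pinning down the constants in this $T_v$ calculation and making the conditioning in the Chernoff step rigorous, since the per-step lower bound $\mu$ must be argued to hold conditionally on the high-probability event carved out by the zone lemmas.
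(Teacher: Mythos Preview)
Your proof is correct and follows essentially the same route as the paper: bound the potentially non-busy steps via the adversary's budget, invoke the zone lemmas to cap the aggregate probability in $D_2(v)$ at $\bigO{(1/\epsilon)^{2/(\alpha-2)}}$ via the disk-covering argument, multiply the resulting silence probability by the Markov bound on Zone~3, and finish with Chernoff. The paper's own proof is terser but uses exactly these ingredients in the same order.

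One organizational difference worth noting: your final paragraph tackles the propagation of the hypothesis $T_v\le\sqrt{F}/2$ to the condition $T_w\le\sqrt{F}$ needed by the zone lemmas \emph{inside} this proof, whereas the paper simply invokes Lemmas~\ref{lem:upper}, \ref{lem:zone2}, \ref{lem:zone3} here and defers the $T_v$ bookkeeping to the two short lemmas that immediately follow. Your instinct that this is a loose end is sound, and your counting-window argument is the right idea; just be aware that a naive count (at most $|J|/(\sqrt{F}/2)$ increments of size~2 starting from $\sqrt{F}/2$) overshoots $\sqrt{F}$, so the argument really does need the idle steps already established to pull $T_v$ back down---which is why the paper structures it as a separate inductive lemma rather than folding it into this one.
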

\begin{proof}
Fix some node $v$. Let us call a subframe $I'$ in $J$ {\em good} if in Zone 1
and in any $R_1$-disk in Zone 2 of $v$, the aggregate probability is upper
bounded by a constant, and the expected interference due to transmissions at
$v$ induced from Zone 3 is at most $\epsilon \vartheta/2$ throughout $I'$.
From Lemmas~\ref{lem:upper}, \ref{lem:zone2}, and \ref{lem:zone3} it follows
that there is an $(1-\epsilon)$-fraction of good subframes in $J$. Since $R_2
= \bigO{(1/\epsilon)^{1/(\alpha-2)} R_1}$, for any time step $t$ in a good
subframe $I'$ the total aggregate probability in Zones 1 and 2 of $v$ is upper
bounded by $\bigO{(1/\epsilon)^{2/(\alpha-2)}}$. Hence, the probability that
none of the nodes in Zones 1 and 2 of $v$ transmits is given by
\[
  \sum_{w \in Z_1 \cup Z_2} (1-p_w) \ge e^{-2 \sum_{w \in Z_1 \cup Z_2} p_w}
  = 2^{-\bigO{(1/\epsilon)^{2/(\alpha-2)}}}
\]
Due to the Markov inequality, the probability that the interference due to
transmissions in Zone 3 is at least $\epsilon \vartheta$ is at most $1/2$.
These probability bounds hold independently of the other time steps in $I'$.
Moreover, the total interference energy of the adversary in $I'$ is bounded by
$|I'|(1-\epsilon)^2 \vartheta$, which implies that at most a
$(1-\epsilon)$-fraction of the time steps in $I'$ are potentially busy, i.e.,
${\cal ADV}(v) \ge (1-\epsilon)\vartheta$. Hence, for at least a
$2^{-\bigO{(1/\epsilon)^{2/(\alpha-2)}}}$-fraction of the time steps in $I'$,
the probability for $v$ to sense an idle channel is a constant, which implies
the lemma.
\end{proof}

This allows us to prove the following lemma.

\begin{lemma}
If at the beginning of $J$, $T_v \le \sqrt{F}/2$ for all $v$, then also $T_v
\le \sqrt{F}/2$ for all $v$ at the end of $J$, w.h.p.
\end{lemma}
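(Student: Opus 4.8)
The plan is to treat $T_v$ as a drift process and to rule out, by a first-passage argument, that it ever crosses the threshold $\sqrt{F}/2$ during $J$. The only way $T_v$ can grow is the $+2$ rule, which fires at the completion of a counting window (of length equal to the current value of $T_v$) in which no idle step occurred; the only way it shrinks is the $-1$ rule, which fires on every idle step as long as $T_v>1$. I would therefore compare the total upward drift to the total downward drift over $J$ and show that, near the threshold, the downward drift strictly dominates.

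Concretely, suppose for contradiction that some node violates the claim, and consider the first node $v$ and the first time at which $T_v$ exceeds $\sqrt{F}/2$; up to that moment every node $w$ still satisfies $T_w\le\sqrt{F}/2\le\sqrt{F}$, so the aggregate-probability and interference bounds (Lemmas~\ref{lem:upper}, \ref{lem:zone2}, \ref{lem:zone3}) remain in force and the idle-step guarantee behind Lemma~\ref{lem:tvbound} continues to hold. Let $P\subseteq J$ be the maximal final sub-interval throughout which $T_v\ge\sqrt{F}/4$ (ending at the crossing of $\sqrt{F}/2$). On $P$ the net change of $T_v$ is at least $\sqrt{F}/2-\sqrt{F}/4=\sqrt{F}/4>0$. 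Since $T_v\ge\sqrt{F}/4$ everywhere on $P$, every counting window completed inside $P$ has length at least $\sqrt{F}/4$; these windows are disjoint, so the number of $+2$ events in $P$ is at most $4|P|/\sqrt{F}$ and the total upward drift is at most $8|P|/\sqrt{F}$. This already forces $|P|\ge F/32$, so $P$ is a constant fraction of $J$ and boundary effects between $P$ and the subframe grid are lower-order.

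For the downward drift I would re-run the counting in the proof of Lemma~\ref{lem:tvbound} on $P$ rather than on $J$: because all $T_w\le\sqrt{F}$ on $P$, a constant fraction $c^\ast=2^{-\bigO{(1/\epsilon)^{2/(\alpha-2)}}}$ of the steps of $P$ are idle for $v$ with at least constant probability, independently, so by a Chernoff bound the number of idle steps in $P$ is at least $c^\ast|P|(1-\smallO(1))$ w.h.p.; since $T_v>1$ on $P$, each such step decrements $T_v$, giving a total downward drift of at least $c^\ast|P|(1-\smallO(1))$. Combining, the net change on $P$ is at most $|P|\bigl(8/\sqrt{F}-c^\ast(1-\smallO(1))\bigr)$, which is negative once $\sqrt{F}=\bigOmega{(\log N)^2/(\gamma\epsilon)}$ exceeds $16/c^\ast$, contradicting the lower bound $\sqrt{F}/4$ on that same net change. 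A union bound over the $n$ nodes turns this into a w.h.p.\ statement, and the bound of Lemma~\ref{cl_Tinc} can be substituted for the disjoint-window count to reach the same conclusion.

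I expect the main obstacle to be the downward-drift step. Lemma~\ref{lem:tvbound} is phrased for the whole quarter-frame $J$, whereas I need the constant-fraction idle guarantee on the adversarially chosen sub-interval $P$, and the adversary is free to cluster its jamming budget so as to create long ``busy'' runs (of length up to roughly $T$, which may exceed $\sqrt{F}/2$) in which no idle step occurs and $T_v$ can climb. The resolution is exactly the global budget balance above: such busy runs are paid for out of the $(B,T)$-budget and are therefore rare enough that, over any $\Omega(F)$-length window, the guaranteed idle steps between jamming bursts outweigh the increases they permit. Making the ``re-run Lemma~\ref{lem:tvbound} on $P$'' step fully rigorous — in particular the independence/Chernoff argument for idle steps under an adaptive adversary, together with the joint all-nodes first-passage bookkeeping — is where the real work lies.
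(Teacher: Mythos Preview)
Your drift/first-passage argument is sound and would close, but the paper takes a much shorter route. Having already secured, via Lemma~\ref{lem:tvbound}, that $v$ sees $k = 2^{-\bigO{(1/\epsilon)^{2/(\alpha-2)}}}|J|$ idle steps somewhere in $J$, the paper simply argues that the \emph{terminal} value of $T_v$ is maximized by the extremal schedule in which all $k$ idle steps occur first: since $k \gg \sqrt{F}/2 \ge T_v(0)$, this drives $T_v$ down to $1$, and from there the $+2$ rule needs roughly $\sum_i 2i \approx t^2$ non-idle steps to reach level $t$, so $T_v \le \sqrt{|J|} = \sqrt{F}/2$ at the end of $J$. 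That single rearrangement observation replaces your whole contradiction apparatus --- the sub-interval $P$, the bound $|P|\ge F/32$, the drift comparison, and in particular the step you correctly flag as the real work, namely re-establishing the idle-step guarantee of Lemma~\ref{lem:tvbound} on an adversarially located window $P$ rather than on all of $J$. What your approach buys is that it does not rely on the monotonicity claim ``idle-first is worst for the final $T_v$'', which the paper asserts without proof; conversely, the paper's approach needs the idle-count only on $J$ itself, exactly as Lemma~\ref{lem:tvbound} delivers it, so none of the sub-interval bookkeeping is required.
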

\begin{proof}
From the previous lemma we know that every node $v$ senses an idle channel for
$\Omega(|J|)$ time steps in $J$ for any constants $\epsilon>0$ and $\alpha>2$.
$T_v$ is maximized at the end of $J$ if all of these idle time steps happen at
the beginning of $J$, which would get $T_v$ down to 1 at some point.
Afterwards, $T_v$ can rise to a value of at most $t$ for the maximum $t$ with
$\sum_{i=1}^t 2i \le |J|$ (because $v$ increases $T_v$ by 2 each time it sees
no idle channel in the previous $T_v$ steps), which is at most $\sqrt{|J|}$.
Since $\sqrt{|J|} = \sqrt{|F|}/2$, the lemma follows.
\end{proof}

Since $T_v$ can be increased at most $(F/4)\sqrt{F}/2$ many times in $J$, we
get:

\begin{lemma}
If at the beginning of a time frame $I$, $T_v \le \sqrt{F}/2$ for all $v$,
then throughout $I$, $T_v \le \sqrt{F}$ for all $v$, and at the end of $I$,
$T_v \le \sqrt{F}/2$ for all $v$, w.h.p.
\end{lemma}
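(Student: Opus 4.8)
The plan is to exploit that the time frame $I$ of length $F$ decomposes into four consecutive blocks $J_1,\dots,J_4$, each of length $F/4=|J|$, and to feed the immediately preceding lemma (which controls a single block $J$) through this chain of blocks. That lemma guarantees that if a block begins with $T_v\le\sqrt{F}/2$ for all $v$, then it ends with $T_v\le\sqrt{F}/2$ for all $v$ w.h.p.; so the endpoint bound propagates along the four blocks, while a separate, essentially deterministic argument controls how high $T_v$ can climb \emph{inside} a single block.

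First I would establish the end-of-$I$ bound by induction over the blocks. By hypothesis $T_v\le\sqrt{F}/2$ at the start of $J_1$. Applying the preceding lemma to $J_1$ yields $T_v\le\sqrt{F}/2$ at the end of $J_1$, i.e.\ at the start of $J_2$, w.h.p.; iterating over $J_2,J_3,J_4$ gives $T_v\le\sqrt{F}/2$ at every block boundary, and in particular at the end of $I$. Since only four applications are chained, a union bound over the four (each failing with probability at most $n^{-c}$) and over the $n$ nodes keeps the total failure probability polynomially small, so the conclusion holds w.h.p.

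It remains to bound $T_v$ \emph{throughout} $I$. Fix any round $t\in I$ lying in block $J_i$; by the previous step $T_v\le\sqrt{F}/2$ at the start of $J_i$ w.h.p. Growth of $T_v$ occurs only through the $+2$ increments triggered after the counter $c_v$ has traversed $T_v$ rounds without an intervening idle step, and Lemma~\ref{cl_Tinc} already bounds the number of such increments in a subframe in terms of the idle count. Because the counter resets after each increment and must again climb to the current value of $T_v$, raising $T_v$ from its start value to some value $\tau$ costs the telescoping sum $\sum_i 2i$ of rounds — the very sum $\sum_{i=1}^{t}2i\le|J|$ used in the proof of the preceding lemma to cap the end-of-block value by $\sqrt{|J|}$. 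Carrying out this count from the start value $\sqrt{F}/2$ over a window of length at most $|J|=F/4$ shows the accumulated increase cannot lift $T_v$ above $\sqrt{F}$ at round $t$. As $t$ was arbitrary, $T_v\le\sqrt{F}$ throughout $I$ for all $v$ w.h.p.

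The delicate point is this within-block growth estimate: since the increments thin out quadratically as $T_v$ grows, one must verify via the summation $\sum_{i=1}^{t}2i\le|J|$ that, starting from $\sqrt{F}/2$, the maximum value attained \emph{at any round} of a length-$(F/4)$ block stays within $\sqrt{F}$ — not merely the endpoint value, which is what the preceding lemma supplies. The choice $|J|=F/4$, so that $\sqrt{|J|}=\sqrt{F}/2$, is exactly what furnishes the slack between the start bound $\sqrt{F}/2$ and the throughout-bound $\sqrt{F}$. Composing this deterministic estimate with the four chained probabilistic guarantees of the preceding lemma then yields both claims w.h.p.
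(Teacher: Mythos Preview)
Your approach is the paper's approach: split $I$ into four quarter-frames $J_1,\dots,J_4$, chain the preceding lemma across them to propagate the endpoint bound $T_v\le\sqrt{F}/2$, and control the within-block excursion of $T_v$ by the deterministic counting argument on how many $+2$ increments fit into $F/4$ rounds. The paper's own justification is the single sentence preceding the lemma, so you have actually been more explicit than the paper.

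One arithmetic caveat worth tightening: starting from $T_v=a\le\sqrt{F}/2$ and suffering no idle rounds, the time to reach $T_v=a+2k$ is $\sum_{j=0}^{k-1}(a+2j)=ka+k(k-1)$, so within $F/4$ rounds one gets $T_v\le\sqrt{a^2+F}\le\sqrt{5F}/2\approx 1.12\,\sqrt{F}$, not literally $\sqrt{F}$. Your sentence ``the choice $|J|=F/4$ \dots\ furnishes the slack between $\sqrt{F}/2$ and $\sqrt{F}$'' implicitly reuses the sum $\sum_{i=1}^{t}2i$ as if growth restarted from $1$, which it does not. The paper's one-line argument is equally loose on this constant, and since the upstream lemmas (\ref{single_node}, \ref{fast:recovery}) only need $T_w=O(\sqrt{F})$ the analysis is unaffected; but if you want the stated inequality $T_v\le\sqrt{F}$ to hold verbatim you would need to shrink the blocks or relax the bound by the factor $\sqrt{5}/2$.
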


Hence, the upper bounds on $T_v$ that we assumed earlier are valid w.h.p. We
are now ready to prove Theorem~\ref{th:main}.

\begin{proof}[of Theorem~\ref{th:main}]
Recall that a time step is {\em open} for a node $v$ if $v$ and at least one
other node in $D_1(v)$ are not potentially busy. Let $J$ be the set of all
open time steps in $I$. Furthermore, let $k_0$ be the number of times $v$
senses an idle channel in $J$ and let $k_1$ be the number of times $v$
receives a message in $I$. From Lemma~\ref{lem:tvbound} and the assumptions in
Theorem~\ref{th:main} we know that $k_0 =
2^{-\bigO{(1/\epsilon)^{2/(\alpha-2)}}} |I|$.

\medskip

\noindent {\em Case 1:} $k_1 \ge k_0/6$. Then our protocol is
$2^{-\bigO{(1/\epsilon)^{2/(\alpha-2)}}}$-competitive for $v$ and we are done.

\medskip

\noindent {\em Case 2:} $k_1 < k_0/6$. Then we know from Lemma~\ref{cl_Tinc}
that $p_v$ is decreased at most $k_0/2 + \sqrt{F}$ times in $I$ due to $c_u >
T_u$. In addition to this, $p_v$ is decreased at most $k_1$ times in $I$ due
to a received message. On the other hand, $p_v$ is increased at least $k_0$
times in $J$ (if possible) due to an idle channel w.h.p. Also, we know from
our protocol that at the beginning of $I$, $p_v = \hat{p}$. Hence, there must
be at least $(1-1/2-1/6)k_0 - \sqrt{|F|} \ge k_0/4$ rounds in $J$ w.h.p. at
which $p_v = \hat{p}$. Now, recall the definition of a good subframe in the
proof of Lemma~\ref{lem:tvbound}. From Lemmas~\ref{lem:upper},
\ref{lem:zone2}, and \ref{lem:zone3} it follows that at most a $\epsilon
\beta$-fraction of the subframes in $I$ is bad. In the worst case, all of the
time steps in these subframes are open time steps, which sums up to at most
$k_0/8$ if $\beta$ is sufficiently small. Hence, there are at least $k_0/8$
rounds in $J$ that are in good subframes, w.h.p., and at which $p_v =
\hat{p}$, which implies that the other not potentially busy node in $D_1(v)$
has a constant probability of receiving a message from $v$. Using Chernoff
bounds, at least $k_0/16$ rounds with successfully received transmissions can
be identified for $v$, w.h.p.

If we charge $1/2$ of each successfully transmitted message to the sender and
$1/2$ to the receiver, then a constant competitive throughput can be
identified for every node in both cases above. It follows that our protocol is
$2^{-\bigO{(1/\epsilon)^{2/(\alpha-2)}}}$-competitive in $F$.
\end{proof}

Now, let us consider the two cases of Theorem~\ref{thm:main}. Recall that we
allow here any $((1-\epsilon)\vartheta, T)$-bounded adversary.

\begin{proof}[of Theorem \ref{thm:main}]
\subsubsection*{\\Case 1: the adversary is 1-uniform and $\forall v : D_1(v) \not = \emptyset$.}

In this case, every node has a non-empty neighborhood and therefore {\em all}
non-jammed rounds of the nodes are open. Hence, the conditions on a weakly
$((1-\epsilon)\vartheta, T)$-bounded adversary are satisfied. So
Theorem~\ref{th:main} applies, which completes the proof of
Theorem~\ref{thm:main} a).

\subsubsection*{Case 2: {\boldmath $|D_1(v)| \ge 2/\epsilon$} for all {\boldmath $v \in V$}.}

Consider some fixed time interval $I$ with $|I|$ being a multiple of $T$. For
every node $v \in D_1(u)$ let $f_v$ be the number of non-jammed rounds at $v$
in $I$ and $o_v$ be the number of open rounds at $v$ in $I$. Let $J$ be the
set of rounds in $I$ with at most one non-jammed node. Suppose that $|J| >
(1-\epsilon/2)|I|$. Then every node in $D_1(u)$ must have more than
$(\epsilon/2)|I|$ of its non-jammed rounds in $J$. As these non-jammed rounds
must be serialized in $J$ to satisfy our requirement on $J$, it holds that
$|J| > \sum_{v \in D_1(u)} (\epsilon/2)|I| \ge (2/\epsilon) \cdot
(\epsilon/2)|I| = |I|$. Since this is impossible, it must hold that $|J| \le
(1-\epsilon/2)|I|$.

Thus, $\sum_{v \in D_1(u)} o_v \ge (\sum_{v \in D_1(u)} f_v) - |J| \ge (1/2)
\sum_{v \in D_1(u)} f_v$ because $\sum_{v \in D_1(u)} f_v \ge (2/\epsilon)
\cdot \epsilon |I| =2|I|$. Let $D'(u)$ be the set of nodes $v \in D_1(u)$ with
$o_v \ge f_v/4$. That is, for each of these nodes, a constant fraction of the
non-jammed time steps is open. Then $\sum_{v \in D_1(u) \setminus D'(u)} o_v <
(1/4)\sum_{v \in D_1(u)} f_v$, so $\sum_{v \in D'(u)} o_v \ge (1/2) \sum_{v
\in D_1(u)} o_v\geq (1/4)\sum_{v \in D_1(u)} f_v$.

Consider now a set $U \subseteq V$ of nodes so that $\bigcup_{u \in U} D_1(u)
= V$ and for every $v \in V$ there are at most 6 nodes $u \in U$ with $v \in
D_1(u)$. Note $U$ is easy to construct in a greedy fashion for arbitrary UDGs,
and therefore for $D_1(u)$ in the SINR model, and also known as a {\em
dominating set of constant density}. Let $V'=\bigcup_{u \in U} D'(u)$. Since
$\sum_{v \in D'(u)} o_v \ge (1/4) \sum_{v \in D_1(u)} f_v$ for every node $u
\in U$, it follows that $\sum_{v \in V'} o_v \ge (1/6) \sum_{u\in U} \sum_{v
\in D'(u)} o_v \ge (1/24) \sum_{u\in U} \sum_{v \in D_1(u)} f_v \ge (1/24)
\sum_{v \in V} f_v$. Using that together with Theorem~\ref{th:main}, which
implies that {\sc Sade} is constant competitive w.r.t. the nodes in $V'$,
completes the proof of Theorem~\ref{thm:main} b).
\end{proof}

Finally, we show that {\sc Sade} is self-stabilizing, i.e., it can recover
quickly from any set of $p_v$- and $T_v$-values.


\cancel{At this point we want to draw the readers attention to a drawback of
\ALG. Namely the dependency of the competitiveness on $\epsilon$.

\begin{lemma}
Let $f(\epsilon) = \left(\frac{\Theta(1)}{\epsilon
\vartheta}\right)^{1/(\alpha-2)}+1$ be the size of $R_2$ depending on
$\epsilon$. Then \ALG\ is $2^{-\bigO(1/\epsilon)^{1/(\alpha-2)}}$ competitive.
\end{lemma}
\begin{proof}
We fix some arbitrary node $v$. Due to $R_2 = \bigO {R_1}$, we know that Zone
2 can be divided into a constant number of disks $D_i$, with $1 \leq i \leq
\Theta(1)$ if $\epsilon$ is fixed. Otherwise the definition of $R_2$ implies
$R_2 \geq \left(\frac{\Theta(1)}{\epsilon \vartheta}\right)^{1/(\alpha-2)} +
1$, since $R_2$ is chosen such that the interference at any node $w \in D(v)$
caused by transmitting nodes in Zone 3 is likely to be less than $\epsilon
\vartheta$. This is, $R_2$ increases with decreasing $\epsilon$ and decreases
otherwise. Further Lemma \ref{lem:upper} implies a constant aggregated
probability $p$ for each disk with high probability for almost all rounds
within a given time frame.

Thus the probability for all disks in $D_2(v)$ to remain silent in these steps
is bounded by
$$\prod_{f(\epsilon)} (1-p) \approx e^{- f(\epsilon)/p},$$
which implies the lemma.
\end{proof}}

\subsection{Optimality}\label{sec:lower}

Obviously, if a jammer has a sufficiently high energy budget, it can essentially block all nodes all the time. In the following we call a network \emph{dense} if $\forall v \in V : D_1(v) \geq 1$.

\begin{theorem}\label{thm:main_nonuniform}
The nodes can be positioned so that the transmission range of every node is
non-empty and yet no MAC protocol can achieve any throughput against a
$(B,T)$-bounded adversary with $B > \vartheta$, even if it is 1-uniform.
\end{theorem}
\begin{proof}
Let us suppose the jammer uses an energy budget $B > \vartheta$. If every node $v$ only has nodes right at the border of its disk $D_1(v)$ and the adversary continuously sets ${\cal ADV}(v)=B$, then $v$ will not be able to receive any messages according to the SINR model. Thus the overall throughput in the system is 0.
\end{proof}

\section{Simulations} \label{sec:simulation}

To complement our formal analysis and to investigate the average-case behavior of our protocol, we conducted a simulation study.
In the following, we consider two scenarios which differ in the way nodes are distributed in the 2-dimensional Euclidean space.
In the first scenario, called \textsc{Uni}, the nodes are distributed \emph{uniformly at random} in the 2-dimensional plane of size $25\times 25$ units. In the second scenario, called \textsc{Het}, we first subdivide the 2-dimensional plane of size $25 \times 25$ units into 25 \emph{sub-squares} of size $5\times 5$ units. For each sub-square we then choose the number of nodes $\lambda$ uniformly at random from the interval $[20,1000]$ and distribute said nodes (uniformly at random) in the corresponding sub-square. Consequently, each sub-square potentially provides a different density, where the attribute density represents the average amount of nodes on a spot in the plane of the corresponding scenario. In order to avoid boundary effects, for both \textsc{Uni} and \textsc{Het}, we assume that the Euclidean plane ``wraps around'', i.e., distances are computed modulo the boundaries.

While our formal throughput results in Section~\ref{sec:analysis} hold for \emph{any} adversary which respects the jamming budget constraints, computing the best adversarial strategy (i.e., the strategy which minimizes the throughput of $\ALG$) is difficult. Hence, in our simulations, we consider the following two types of adversaries:
(1) \emph{Regular (or random) jammer (\textsc{Reg}):} given an energy budget $B$ per node, a time interval $T$, and a specific $1 > \epsilon > 0$, the adversary randomly jams each node every $\epsilon$th round (on average) using exactly $\frac{B}{\epsilon}$ energy per node. Additionally we make sure that the overall budget $B$ is perfectly used up at the end of $T$.
(2) \emph{Bursty (or deterministic) jammer (\textsc{Bur}):} For each time period $T$, the adversary jams \emph{all} initial rounds at the node, until the budget $B$ is used up. The remaining rounds in $T$ are unjammed. In other words, the first $\epsilon T$ many rounds are jammed by the adversary using exactly $\frac{B}{\epsilon}$ energy per node.

If not stated otherwise, we use the jammer \textsc{Reg} and parameters $\alpha=3, \epsilon = \frac{1}{3}$, $\beta=2$, $\Pi=8$, $T=60$, $B=(1-\epsilon)\cdot \beta$ and run the experiment for 3000 rounds. We will typically plot the percentage of successful message receptions, averaged over all nodes, with respect to the \emph{unjammed time steps}. If not specified otherwise, we repeat each experiment ten times with different random seeds, both for the distribution of nodes in the plane as well as the decisions made by our MAC protocol. By default, our results show the \emph{average} over these runs; the variance of the runs is low.

\textbf{Impact of Scale and $\alpha$.} We first study the throughput as a function of the network size. Therefore we distribute $n$ nodes uniformly in the $\sqrt{n}\times \sqrt{n}$ plane. Figure~\ref{fig:impact_scale} (\emph{top left}) shows our results under the \textsc{Reg} (or \emph{random}) jammer and different $\alpha$ values.
\begin{figure*} [t]
\begin{center}
\includegraphics[width=.38\textwidth]{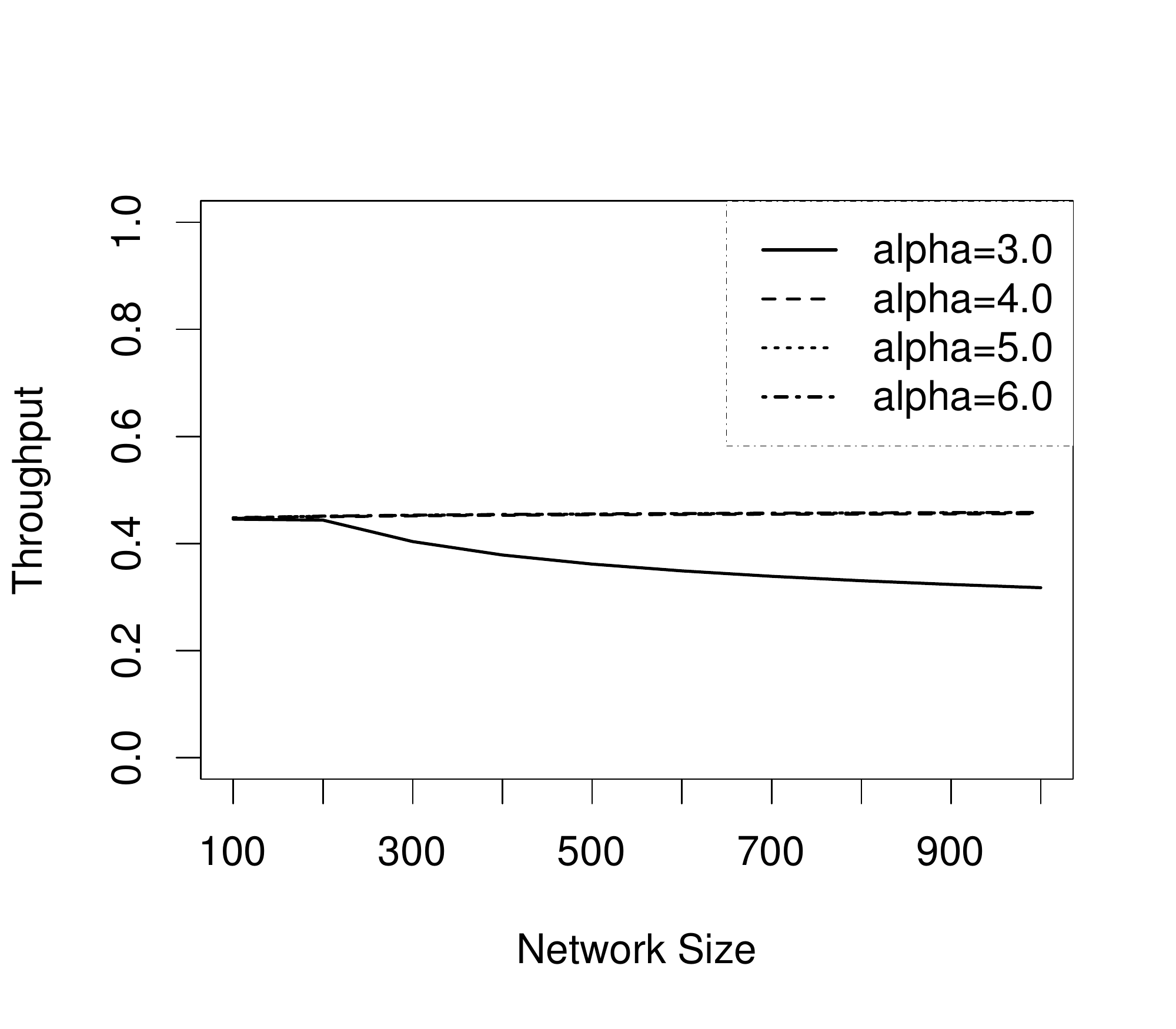}~~~~~~~~\includegraphics[width=.38\textwidth]{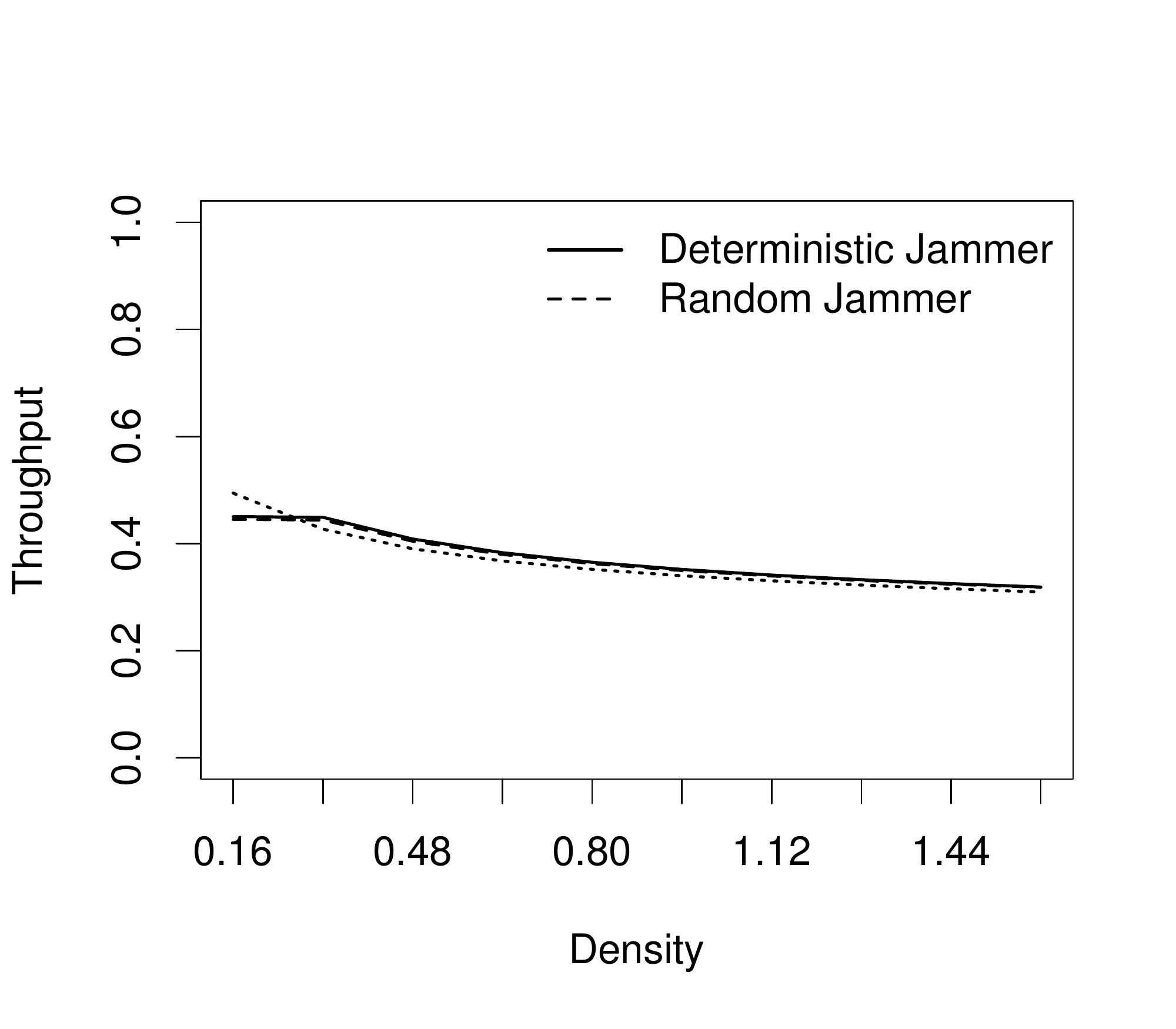}\\
\includegraphics[width=.38\textwidth]{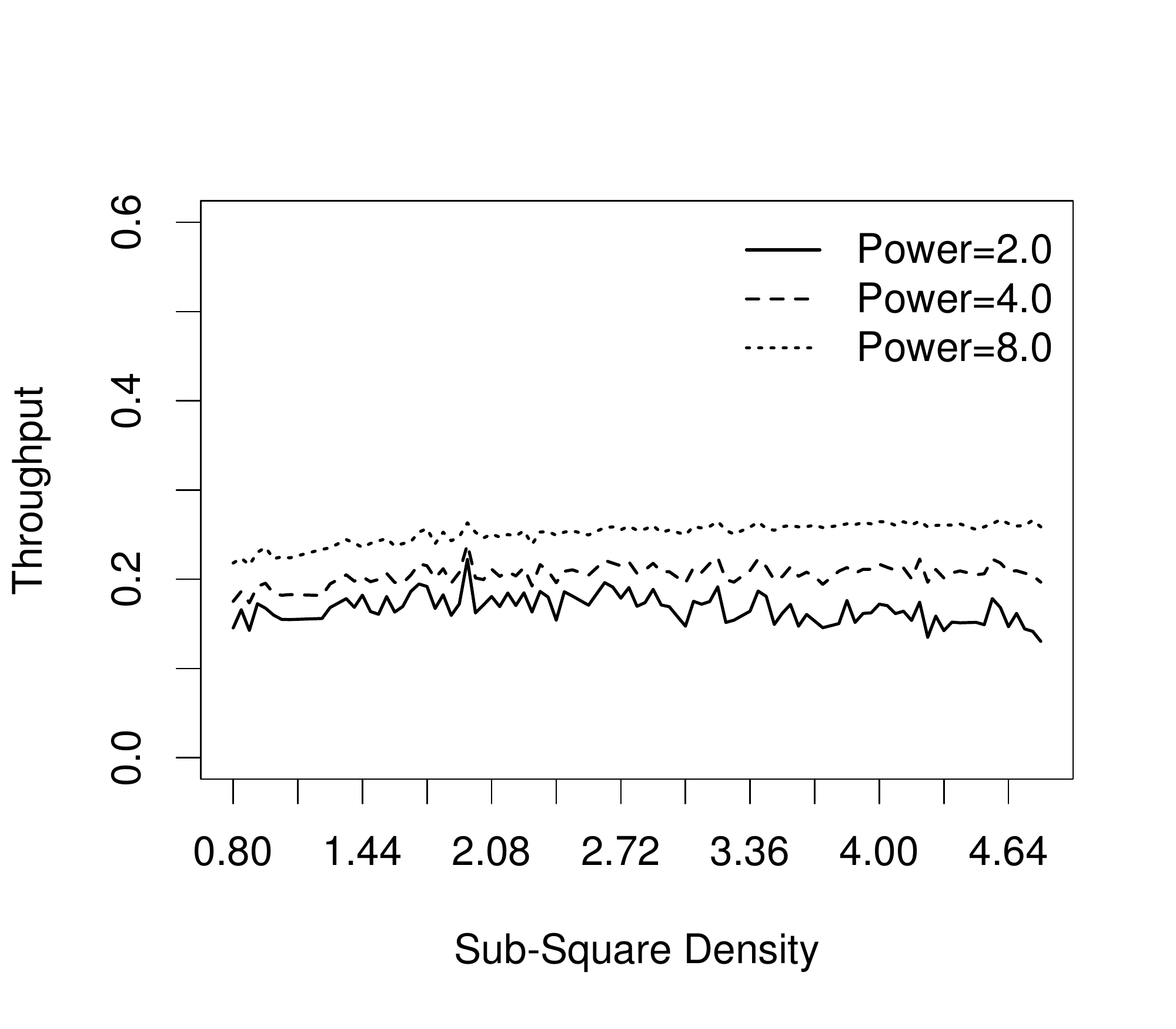}~~~~~~~~\includegraphics[width=.38\textwidth]{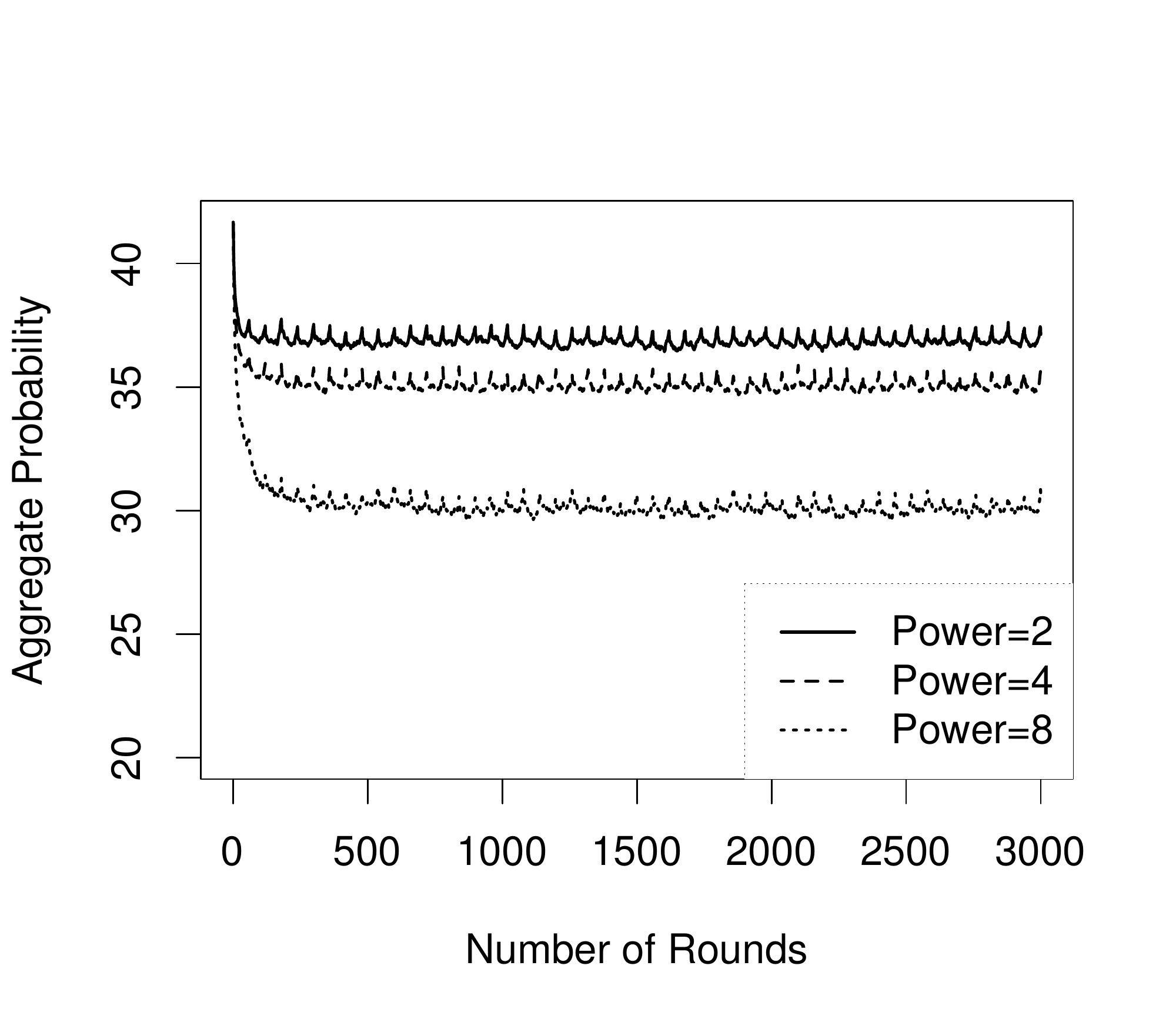}
\caption{Throughput as a function of network size (\emph{top left}), density (\emph{top right}), sub-grid density (\emph{bottom left}), and power (\emph{bottom right}).}\label{fig:impact_scale}
\end{center}
\end{figure*}
First, we can see that the competitive throughput is around 40\%,
which is higher than what we expect from our worst-case formal analysis.
Interestingly, for $\alpha=3$, we observe a small throughput decrease for larger networks;
but for $\alpha>3$, the throughput is almost independent of the network scale. (In the literature, $\alpha$ is typically modeled as 3 or 4.)

This partially confirms Theorem~\ref{thm:main}: a higher $\alpha$ renders the transmissions and power propagation more local. This locality can be exploited by $\ALG$ to some extent.

\textbf{Impact of Density.} Next, we investigate how the performance of $\ALG$ depends on the node density. We focus on $\alpha=3$
and study both the \textsc{Reg} jammer as well as the \textsc{Bur} (deterministic) jammer.
Figure~\ref{fig:impact_scale} (\emph{top right}) shows that results for the \textsc{Uni} scenario ($n$ nodes distributed uniformly in the $25\times 25$ plane, i.e., density $n/625$). The throughput is similar under both jammers, and slightly declines for denser networks.
This effect is very similar to the effect of having larger (but equidistant) networks.


However, $\ALG$ suffers more from more heterogenous densities.
 The results for the scenario \textsc{Het} are shown in Figure~\ref{fig:impact_scale} (\emph{bottom left}).
 While the throughput is generally lower, the specific sub-square density plays a minor role.



\textbf{Convergence Time.}
$\ALG$ adapts quite fast to the given setting, as the nodes increase and decrease their sending probabilities in a multiplicative manner. Being able to adapt quickly is an important feature, in particular in dynamic or mobile environments where nodes can join and leave over time, or where nodes are initialized with too high or low sending probabilities. Our distributed MAC protocol will adjust automatically and ``self-stabilize''.

Figure~\ref{fig:impact_scale} (\emph{bottom right}) shows representative executions over time and plots the aggregate probability.
Initially, nodes have a maximum sending probability $\hat{p}=1/24$. This will initially lead to many collisions; however, very quickly, the senders back off and the overall sending probabilities (the \emph{aggregated probability}) reduce almost exponentially, and we start observing successful message transmissions. (Observe that the aggregated ``probability'' can be higher than one, as it is simply the sum of the probabilities of the individual nodes.)

The sum of all sending probabilities also converges quickly for any other $\Pi$. However, for smaller powers, the overall probability is higher. This is consistent with the goal of $\ALG$: because for very large sending powers, also more remote nodes in the network will influence each other and interfere, it is important that there is only a small number of concurrent senders in the network at any time---the aggregated sending probability must be small. On the other hand, small powers allow for more local transmissions, and to achieve a high overall throughput, many senders should be active at the same time---the overall sending probability should be high.


\textbf{802.11a and Impact of Epsilon.}
We also compared the throughput of $\ALG$ to the standard 802.11 MAC protocol (with a focus on 802.11a). For simplicity, we set the unit slot time for 802.11 to 50~$\mu s$. The backoff timer of the 802.11 MAC protocol implemented here uses units of 50~$\mu s$. We omit SIFS, DIFS, and RTS/CTS/ACK. Our results show that 802.11a suffers more from the interference, while it yields a similar throughput for large $\epsilon$. In fact, we find that for $\epsilon$ close to 0, 802.11a can even slightly outperform $\ALG$.

When varying $\epsilon$, we find that the worst-case bound of Theorem~\ref{thm:main} may be too pessimistic in many scenarios, and the throughput depends to a lesser extent on the constant $\epsilon$.

\section{Related Work}\label{sec:relwork}


Traditional jamming defense mechanisms typically operate on the physical layer
\cite{LiuNST07,NavdaBGR07,SimonOSL01}, and mechanisms have been designed to both {\em
avoid} jamming as well as {\em detect} jamming. Especially spread spectrum technology is very effective to avoid jamming, as with widely spread
signals, it becomes harder to detect the start of a packet quickly enough in
order to jam it. Unfortunately, protocols such as IEEE 802.11b use relatively
narrow spreading~\cite{IEEE99}, and some other IEEE 802.11 variants spread
signals by even smaller factors~\cite{Brown:mobihoc06}. Therefore, a jammer
that simultaneously blocks a small number of frequencies renders spread
spectrum techniques useless in this case. As jamming strategies can come in
many different flavors, detecting jamming activities by simple methods based
on signal strength, carrier sensing, or packet delivery ratios has turned out
to be quite difficult~\cite{Li:infocom07}.

Recent work has investigated \emph{MAC layer strategies} against jamming in more detail,
for example coding strategies~\cite{Chiang:mobicom07}, channel surfing and
spatial retreat~\cite{Alnifie:Q2SWinet07,Xu:wws04}, or mechanisms to hide
messages from a jammer, evade its search, and reduce the impact of corrupted
messages~\cite{Wood:secon07}. Unfortunately, these methods do not help against
an adaptive jammer with {\em full} information about the history of the
protocol, like the one considered in our work.

In the theory community, work on MAC protocols has mostly focused on
efficiency. Many of these protocols are random backoff or tournament-based
protocols~\cite{Bender05,Chlebus06,Gold00,Hastad96,Kwak05,Raghavan99} that do
not take jamming activity into account and, in fact, are not robust against it
(see~\cite{singlehop08} for more details). The same also holds for many MAC
protocols that have been designed in the context of broadcasting~\cite{CR06}
and clustering~\cite{Kuhn04}.
Also some work on jamming is known (e.g.,~\cite{citer2} for a short overview).
There are two basic approaches in the literature. The first assumes randomly
corrupted messages (e.g.~\cite{PP05}), which is much easier to handle than
adaptive adversarial jamming~\cite{Raj08}. The second line of work either
bounds the number of messages that the adversary can transmit or disrupt with
a limited energy budget (e.g.~\cite{GGN06,KBKV06}) or bounds the number of
channels the adversary can jam
(e.g.~\cite{dolevpodc,DGGN07,DGGN08,shlomi07,GGKN09,seth09,dcoss09}).

The protocols in~\cite{GGN06,KBKV06} can tackle adversarial jamming at both
the MAC and network layers, where the adversary may not only be jamming the
channel but also introducing malicious (fake) messages (possibly with address
spoofing). However, they depend on the fact that the adversarial jamming
budget is finite, so it is not clear whether the protocols would work under
heavy continuous jamming. (The result in~\cite{GGN06} seems to imply that a
jamming rate of $1/2$ is the limit whereas the handshaking mechanisms in
\cite{KBKV06} seem to require an even lower jamming rate.)

In the multi-channel version of the problem introduced in the theory community
by Dolev~\cite{shlomi07} and also studied in
\cite{dolevpodc,DGGN07,DGGN08,shlomi07,GGKN09,seth09,dcoss09}, a node can only
access one channel at a time, which results in protocols with a fairly large
runtime (which can be exponential for deterministic protocols
\cite{DGGN07,GGKN09} and at least quadratic in the number of jammed channels
for randomized protocols~\cite{DGGN08,dcoss09} if the adversary can jam almost
all channels at a time). Recent work~\cite{dolevpodc} also focuses on the
wireless synchronization problem which requires devices to be activated at
different times on a congested single-hop radio network to synchronize their
round numbering while an adversary can disrupt a certain number of frequencies
per round. Gilbert et al.~\cite{seth09} study robust information exchange in
single-hop networks.

Our work is motivated by the work in~\cite{Raj08} and~\cite{singlehop08}. In
\cite{Raj08} it is shown that an adaptive jammer can dramatically reduce the
throughput of the standard MAC protocol used in IEEE 802.11 with only limited
energy cost on the adversary side. Awerbuch et al.~\cite{singlehop08}
initiated the study of throughput-competitive MAC protocols under continuously
running, adaptive jammers, but they only consider single-hop wireless
networks. Their approach has later been extended to reactive jamming environments~\cite{icdcs11stefan}, co-existing networks~\cite{podc12stefan}
and applications such as leader election~\cite{mobihoc11stefan}.

The result closest to ours is the robust MAC protocol for Unit Disk Graphs presented in~\cite{disc10}.
In contrast to~\cite{disc10}, we initiate the study of the more relevant and realistic \emph{physical interference model}~\cite{sinr-original}
and show that a competitive throughput can still be achieved.
As unlike in Unit Disk Graphs, in the SINR setting far-away communication
can potentially interfere and there is no absolute notion of an idle medium, a new protocol is needed whose
geometric properties must be understood. For the SINR setting, we also introduce a new adversarial model (namely the \emph{energy budget adversary}).


\section{Conclusion}\label{sec:conclusion}

This paper has shown that robust MAC protocols achieving a constant competitive throughput exist even in the physical model. This concludes a series of research works in this area.
 Nevertheless, several interesting questions remain open.
For example, while our theorems prove that  $\ALG$ is as robust
as a MAC protocol can get within our model and for constant $\epsilon$, we conjecture that a throughput
which is polynomial in
$(1/\epsilon)$ is possible. However, we believe that such a claim is very difficult to prove.
We also plan to explore the performance of $\ALG$ under specific node mobility patterns.

\textbf{Acknowledgments.} The authors would like to thank Michael Meier from Paderborn University for his
help with the evaluation of the protocol.

{\footnotesize \renewcommand{\baselinestretch}{.8}
  \bibliographystyle{abbrv} \bibliography{jammers}
}

\end{document}